\documentclass[a4paper,11pt]{article}
\usepackage{latexsym}
\usepackage{amssymb}
\usepackage{amsfonts}
\usepackage{amsmath,amsthm}
\usepackage{mathrsfs}
\usepackage{bm}
\usepackage{enumerate}
\usepackage{stmaryrd}
\usepackage{xcolor}
\usepackage{commath}
\usepackage{array}
\usepackage{cases}
\usepackage{comment}

\makeatletter
\newenvironment{@abssec}[1]{%
    \if@twocolumn

    \section*{#1}%
    \else

      \vspace{.05in}\footnotesize
      \parindent .2in
 {\upshape\bfseries #1. }\ignorespaces
    \fi}

    {\if@twocolumn\else\par\vspace{.1in}\fi}

\newenvironment{keywords}{\begin{@abssec}{\keywordsname}}{\end{@abssec}}

\newenvironment{AMS}{\begin{@abssec}{\AMSname}}{\end{@abssec}}

\newcommand\keywordsname{Keywords}
\newcommand\AMSname{AMS subject classifications}
\newcommand\AMname{AMS subject classification}
\makeatother
\newcommand\restr[2]{{
\left.\kern-\nulldelimiterspace 
#1 
\vphantom{|} 
\right|_{#2} 
}}

\theoremstyle{plain}
\newtheorem{thm}{Theorem}[section]

\newtheorem{prop}[thm]{Proposition}

\theoremstyle{definition}
\newtheorem{dfn}[thm]{Definition}
\newtheorem{e.g.}{Example}
\theoremstyle{remark}
\newtheorem{rem}[thm]{Remark}
\theoremstyle{plain}

\newenvironment{pf}[1]
{{\noindent\textbf{Proof of #1.}}}{\hfill \qed}

\numberwithin{equation}{section}

\def\XXint#1#2#3{{\setbox0=\hbox{$#1{#2#3}{\int}$}
\vcenter{\hbox{$#2#3$}}\kern-.5\wd0}}

\newcommand{\D}{\mathcal{D}}

\newcommand{\R}{\mathbb R}

\newcommand{\N}{\mathbb N}

\newcommand{\ep}{\varepsilon}

\newcommand{\dl}{\delta}

\allowdisplaybreaks[1]

\title{\bf From Tsallis to KL: Convergence and Error Estimates for Tsallis-Regularized Optimal Transport}

\author{Takeshi Suguro and Toshiaki Yachimura}
\date{}

\begin{document}

\maketitle

\begin{abstract}
We study the Tsallis-to-Kullback--Leibler (KL) limit for entropy-regularized optimal transport with nonnegative bounded continuous costs. Fixing the regularization parameter $\varepsilon > 0$, we first derive an exact variational reformulation of Tsallis-regularized optimal transport in terms of the Tsallis information projection onto the set of couplings. The formula isolates an explicit correction term and thereby explains why, unlike in the KL case, the regularized transport problem and the corresponding information projection problem do not coincide exactly. We also establish existence and uniqueness for the Tsallis information projection. We then prove, with respect to the narrow topology, the $\Gamma$-convergence of the Tsallis-regularized functionals to the KL-regularized functional as $q\downarrow1$, together with narrow convergence of their unique minimizers. Finally, we obtain explicit error estimates of order $O(q-1)$ for both the regularized optimal transport values and the associated information projection values. These results quantify the passage from Tsallis regularization to the classical KL setting and clarify the relation between entropic regularization and information projection for $1 < q \leq 2$.
\end{abstract}

\begin{keywords}
Optimal transport, entropic regularization, Tsallis relative entropy, Kullback--Leibler divergence, information projection, $\Gamma$-convergence
\end{keywords}

\begin{AMS}
49Q22, 49J45, 94A17
\end{AMS}

\pagestyle{plain}
\thispagestyle{plain}

\section{Introduction}\label{intro}

Entropy-regularized optimal transport is now widely used for comparing probability measures in image processing, computer graphics, and machine learning \cite{cuturi2013sinkhorn, patrini2020sinkhorn, solomon2015convolutional}. One of its main advantages is that the addition of a strictly convex entropy-type penalty leads to a strictly convex variational problem with a unique minimizer, while retaining essential geometric aspects of optimal transport. The most common choice of regularizer is the Kullback--Leibler (KL) divergence \cite{KL51}, for which the resulting problem can be solved efficiently by the Sinkhorn--Knopp algorithm \cite{sinkhorn1964relationship, SK1967}. The KL divergence also admits an information projection viewpoint, which clarifies its close connections to Schr\"odinger's probabilistic perspective and information geometry; see, e.g., \cite{AmariNagaoka2000}. This viewpoint traces back to Schr\"odinger's work on statistical physics \cite{schrodinger1931} and has since motivated a substantial body of work on Schr\"odinger bridges and their links to stochastic control; see, e.g., \cite{chen2021stochastic, Leonard2014, Nutz2022introduction} and references therein.

We briefly recall the KL-regularized formulation and its projection. 
Let $(X_1,\mu_1)$ and $(X_2,\mu_2)$ be Polish probability spaces and let
$c:X_1\times X_2\to\mathbb{R}$ be a nonnegative bounded continuous cost
function. Denote by $\Pi(\mu_1,\mu_2)$ the set of couplings with marginals
$(\mu_1, \mu_2)$, and set 
\begin{equation*}
m = \mu_1 \otimes \mu_2.
\end{equation*}
The KL-regularized optimal transport (OT) problem is defined by 
\begin{equation}\label{KL-reg OT}
  \mathrm{OT}_\ep(\mu_1, \mu_2)
  = \inf_{\pi\in\Pi(\mu_1,\mu_2)}\left\{\int_{X_1 \times X_2} c\,d\pi + \ep\,\mathrm{KL}(\pi,m)\right\}, 
\end{equation}
where $\mathrm{KL}$ denotes the Kullback--Leibler divergence \cite{KL51}, defined as follows: for a finite positive measure $\nu$ on a Polish space and a probability measure $\mu$, 
\begin{equation*}
\mathrm{KL}(\mu, \nu) 
= \left\{
  \begin{aligned}
  &\int \frac{d\mu}{d\nu} \log{\frac{d\mu}{d\nu}}\, d\nu && (\mu\ll\nu), \\
  &+ \infty && (\mathrm{otherwise}).
  \end{aligned}
  \right. 
\end{equation*}
Introducing the Gibbs kernel $K_\ep = e^{-c/\ep} m$, a direct computation shows that the KL-regularized OT \eqref{KL-reg OT} can equivalently be written as the following information projection problem:
\begin{equation}\label{KLproj pb}
  \mathrm{OT}_\ep(\mu_1, \mu_2) = \ep\inf_{\pi\in\Pi(\mu_1,\mu_2)} \mathrm{KL}(\pi, K_\ep).
\end{equation}
Thus, the KL-regularized OT \eqref{KL-reg OT} coincides with the information projection \eqref{KLproj pb}. This relationship is a special feature of the KL divergence. 

For a general $f$-divergence \cite{Csiszar1967, Csiszar1975}, one can also define an $f$-projection onto $\Pi(\mu_1,\mu_2)$, but its objective functional does not coincide with the corresponding regularized OT functional in general. To study this issue concretely, we consider the Tsallis divergence (Tsallis relative entropy) proposed by Tsallis \cite{T88}. It is a power-type extension of the KL divergence parameterized by $q > 1$ and reduces to the KL divergence as $q \downarrow 1$. This makes Tsallis regularization a natural test case for understanding and quantifying the gap between regularized OT and information projection.

For $q > 1$, we consider the Tsallis relative entropy
\begin{equation*}
  D_q(\mu,\nu)
  = \left\{\begin{aligned}
    &\frac{1}{q-1}\int \left[\left(\frac{d\mu}{d\nu}\right)^q-\frac{d\mu}{d\nu}\right]d\nu && (\mu\ll\nu), \\
    &+ \infty && (\mathrm{otherwise})
    \end{aligned}\right.
\end{equation*}
for a finite positive measure $\nu$ on a Polish space and a probability measure $\mu$, and define the following Tsallis-regularized OT functional
\begin{equation}\label{Tsallis-reg OT}
  \mathrm{OT}_{q,\ep}(\mu_1, \mu_2)
  = \inf_{\pi\in\Pi(\mu_1,\mu_2)}\left\{\int_{X_1 \times X_2} c\,d\pi + \ep D_q(\pi,m)\right\}.
\end{equation}

For $q \in \R\setminus\{1\}$, the $q$-logarithm and $q$-exponential are defined by
\begin{equation}\label{q-log and  q-exp}
\log_q{x} = \frac{x^{1 - q} - 1}{1 - q}\quad (x > 0), \qquad
  \exp_q{x} = \left(1 + (1-q)x\right)^{\frac{1}{1-q}}\quad (x \in I_q),
\end{equation}
where $I_q = \{x\in\R:1+(1-q)x>0\}$. For $q>1$, we use the continuous extensions $\log_{2-q}(0) = -1/(q-1)$ and $0\log_{2-q}(0) = 0$. As $q\to1$, the $q$-logarithm and $q$-exponential recover the usual logarithm and exponential. Using the $q$-logarithm, one can write $D_q$ as follows:
\begin{equation*}
D_q(\mu, \nu)
  = \int \frac{d\mu}{d\nu} \log_{2 - q}\left(\frac{d\mu}{d\nu}\right)\, d\nu
\end{equation*}
for $\mu \ll \nu$ and $q > 1$. 

Using properties of the $q$-exponential function and a direct computation, one can derive a variational identity relating $\mathrm{OT}_{q,\ep}$ and $D_q(\pi, K_{q,\ep})$.
\begin{prop}\label{prop1}
Let $q>1$ and $\ep>0$. Then
\begin{equation}\label{eq;prop1}
\mathrm{OT}_{q, \ep}(\mu_1, \mu_2) = \inf_{\substack{\pi \in \Pi(\mu_1, \mu_2)\\D_q(\pi,m)<+\infty}} \left\{\ep D_q(\pi, K_{q,\ep}) - (q - 1) \int_{X_1\times X_2} c \log_{2 - q}\left(\frac{d\pi}{dm}\right)\, d\pi\right\},
\end{equation}
where $K_{q,\ep}$ is the Tsallis kernel defined as
\begin{equation*}
  K_{q,\ep} = \exp_q\left(-\frac{c}{\ep}\right) m
  = \left[1+(q-1)\frac{c}{\ep}\right]^{-\frac{1}{q-1}} m.  
\end{equation*}
\end{prop}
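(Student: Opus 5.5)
We prove the identity pointwise in $\pi$: for every $\pi\in\Pi(\mu_1,\mu_2)$ with $D_q(\pi,m)<+\infty$,
\[
\int c\,d\pi + \ep D_q(\pi,m) = \ep D_q(\pi,K_{q,\ep}) - (q-1)\int c\log_{2-q}\Bigl(\frac{d\pi}{dm}\Bigr)d\pi .
\]
Taking the infimum over such $\pi$ then gives \eqref{eq;prop1}. Couplings with $D_q(\pi,m)=+\infty$ contribute $+\infty$ to \eqref{Tsallis-reg OT} and may be dropped. The infimum in \eqref{Tsallis-reg OT} is finite because $m\in\Pi(\mu_1,\mu_2)$ and $D_q(m,m)=0$, so restricting to $D_q(\pi,m)<+\infty$ does not change $\mathrm{OT}_{q,\ep}$.

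To verify the pointwise identity, write $f=d\pi/dm$ and $g=\exp_q(-c/\ep)=[1+(q-1)c/\ep]^{-1/(q-1)}$. Since $c$ is bounded and nonnegative, $g$ is bounded above by $1$ and below by a positive constant. Hence $K_{q,\ep}$ and $m$ are mutually absolutely continuous, $\pi\ll K_{q,\ep}$, and $d\pi/dK_{q,\ep}=f/g$. The key algebraic fact is
\[
g^{1-q}=1+(q-1)\frac{c}{\ep}.
\]
Using it, we expand
\[
\ep D_q(\pi,K_{q,\ep})=\frac{\ep}{q-1}\int\Bigl[\Bigl(\frac fg\Bigr)^q-\frac fg\Bigr]g\,dm=\frac{\ep}{q-1}\int\bigl[f^q g^{1-q}-f\bigr]dm .
\]
Substituting $g^{1-q}=1+(q-1)c/\ep$ gives
\[
\ep D_q(\pi,K_{q,\ep})=\ep D_q(\pi,m)+\int c f^q\,dm .
\]

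It remains to express $\int cf^q\,dm$ through the correction term. Since $f\log_{2-q}f=(f^q-f)/(q-1)$, we have
\[
(q-1)\int c\log_{2-q}(f)\,d\pi=(q-1)\int c f\log_{2-q}(f)\,dm=\int c f^q\,dm-\int c\,d\pi .
\]
Subtracting this from the previous display yields exactly $\int c\,d\pi+\ep D_q(\pi,m)$, which is the pointwise identity.

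The only delicate step is integrability: each integral must be finite so that splitting and recombining them is legitimate. This holds because $D_q(\pi,m)<+\infty$ together with $\int f\,dm=1$ gives $f^q\in L^1(m)$. Since $c$ is bounded, $cf^q$ and $cf$ are then integrable, so every rearrangement above involves finite quantities and no $\infty-\infty$ issue arises. The identities involving $\log_{2-q}$ at $f=0$ use the stated extension $0\log_{2-q}(0)=0$ and remain consistent there.
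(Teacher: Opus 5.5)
Your proof is correct and has the same structure as the paper's: a pointwise identity for each coupling with $D_q(\pi,m)<+\infty$, integrability from $\int f^q\,dm=1+(q-1)D_q(\pi,m)$, and then the infimum after discarding couplings of infinite energy. The only difference is in the algebra: the paper applies the non-additivity formula for $\log_{2-q}$ together with $\log_{2-q}(dm/dK_{q,\ep})=c/\ep$, whereas you expand the powers directly via $g^{1-q}=1+(q-1)c/\ep$ and $f\log_{2-q}f=(f^q-f)/(q-1)$, which is the same computation the paper itself uses for the lower bound in the proof of Theorem~\ref{thm;3}.
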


Classical projection theory for $f$-divergences relates the well-posedness of information projection problems to the convexity of the generator and the geometry of the admissible set; see, e.g., \cite{Csiszar1975, BroniatowskiKeziou}. In our setting, the Tsallis divergence $D_q$ is generated by
\begin{equation*}
\phi_q(x) = x\log_{2-q}x = \frac{x^q-x}{q-1}.
\end{equation*}
For the dual formulation, it is convenient to use the affine normalization
\begin{equation*}
\Phi_q(x) = \phi_q(x)-x+1.
\end{equation*}
Since $m = \mu_1\otimes\mu_2$ and $\pi\in\Pi(\mu_1,\mu_2)$ are probability measures, the affine terms cancel after integration, and hence
\begin{equation*}
\int_{X_1\times X_2} \Phi_q\left(\frac{d\pi}{dm}\right)\,dm = D_q(\pi,m).
\end{equation*}

We now consider the information projection problem with respect to the Tsallis divergence $D_q$:
\begin{equation}\label{Tsallis projection pb}
\ep\inf_{\pi\in\Pi(\mu_1,\mu_2)} D_q(\pi,K_{q,\ep})
\end{equation}
for $1 < q \leq 2$. In what follows, we call this problem the Tsallis projection problem. We first note that this problem has a finite admissible competitor. Indeed, since $c$ is bounded and nonnegative,
\begin{equation*}
\left( 1+(q-1)\frac{\|c\|_\infty}{\ep} \right)^{-\frac{1}{q-1}} \leq \frac{dK_{q,\ep}}{dm} \leq 1.
\end{equation*}
Thus, $K_{q,\ep}$ and $m$ are mutually absolutely continuous, and both density ratios are bounded. Since $m\in\Pi(\mu_1,\mu_2)$, it follows that $D_q(m,K_{q,\ep}) < +\infty$. Moreover, whenever $D_q(\pi,K_{q,\ep})<+\infty$, we have
\begin{equation*}
D_q(\pi,K_{q,\ep}) \geq -\frac{1}{q-1}.
\end{equation*}
Consequently, the infimum in \eqref{Tsallis projection pb} is finite.

We next establish existence and uniqueness of the minimizer. The set $\Pi(\mu_1,\mu_2)$ is narrowly compact, while the functional $\pi\longmapsto D_q(\pi,K_{q,\ep})$ is lower semicontinuous with respect to narrow convergence. Therefore, the infimum in \eqref{Tsallis projection pb} is attained. Moreover, since $\phi_q$ is strictly convex for $q>1$, the Tsallis divergence is strictly convex among couplings for which it is finite. Indeed, let $\pi_0,\pi_1\in\Pi(\mu_1,\mu_2)$ be distinct and satisfy $D_q(\pi_0,K_{q,\ep})<+\infty$ and $D_q(\pi_1,K_{q,\ep})<+\infty$. Then, for every $t\in(0,1)$,
\begin{equation*}
D_q\left(t\pi_0+(1-t)\pi_1,K_{q,\ep}\right) < tD_q(\pi_0,K_{q,\ep}) + (1-t)D_q(\pi_1,K_{q,\ep}).
\end{equation*}
Since $\Pi(\mu_1,\mu_2)$ is convex, $t\pi_0+(1-t)\pi_1 \in \Pi(\mu_1,\mu_2)$.
Therefore, two distinct minimizers of \eqref{Tsallis projection pb} cannot exist, and hence the minimizer is unique.

Tsallis and more general divergence regularizations of optimal transport have been studied from several viewpoints \cite{Bao2022, muzellec2017, terjek2022optimal}. Our previous work \cite{suguro2023convergence} studies the limit $\ep\downarrow0$ toward unregularized optimal transport, whereas the present paper fixes $\ep > 0$ and considers the limit $q \downarrow 1$ toward KL-regularized optimal transport. 

The aim of this paper is to establish qualitative and quantitative convergence results for this Tsallis-to-KL limit. First, we prove that the Tsallis-regularized functionals \eqref{Tsallis-reg OT} $\Gamma$-converge to the KL-regularized functional \eqref{KL-reg OT}. We refer the reader to the monographs \cite{braides2002gamma,dal2012introduction} for background on $\Gamma$-convergence.

\begin{dfn}[$\Gamma$-convergence]\label{Gamma-convergence}
Let $\{q_k\}_{k\in\N}\subset(1,2]$ satisfy $q_k\downarrow1$. Define $F_k:\mathcal P(X_1\times X_2)\to\R\cup\{+\infty\}$ by
\begin{equation*}
F_k(\pi) = 
\begin{cases} 
\displaystyle \int_{X_1 \times X_2} c\, d\pi + \ep D_{q_k}(\pi, m) \quad &\text{if} \,\, \pi \in \Pi(\mu_1, \mu_2), \\
+\infty &\text{otherwise}, 
\end{cases}
\end{equation*}
and define $F:\mathcal P(X_1\times X_2)\to\R\cup\{+\infty\}$ by
\begin{equation*}
F(\pi) = 
\begin{cases} 
\displaystyle \int_{X_1 \times X_2} c\, d\pi + \ep \mathrm{KL}(\pi, m)\quad &\text{if} \,\, \pi \in \Pi(\mu_1, \mu_2), \\
+\infty &\text{otherwise}. 
\end{cases}
\end{equation*}
We say that $\{F_k\}_{k\in\N}$ $\Gamma$-converges to $F$ with respect to the narrow topology if the following two conditions hold:
\begin{description}
\item[(liminf inequality)] For any sequence $\{\pi_k\}_{k\in\N}\subset\mathcal P(X_1\times X_2)$ such that $\pi_k\to\pi$ narrowly,
\begin{equation*}
F(\pi) \leq \liminf_{k \to \infty} F_k(\pi_k).
\end{equation*}

\item[(limsup inequality)] For any $\pi\in\mathcal P(X_1\times X_2)$, there exists a sequence $\{\pi_k\}_{k\in\N}\subset\mathcal P(X_1\times X_2)$ such that $\pi_k\to\pi$ narrowly and
\begin{equation*}
F(\pi) \geq \limsup_{k \to \infty} F_k(\pi_k).
\end{equation*}
\end{description}
Here, $\mathcal P(X)$ denotes the set of Borel probability measures on $X$. 
\end{dfn}
The following theorem gives the qualitative convergence result.
\begin{thm}\label{thm;narrow}
Let $X_i$ be Polish spaces, let $\mu_i\in\mathcal P(X_i)$ $(i=1,2)$, and let $c\in C_b(X_1\times X_2)$ be nonnegative. Let $\ep>0$ and $\{q_k\}_{k\in\N}\subset(1,2]$ satisfy $q_k\downarrow1$. Then $\{F_k\}_{k\in\N}$ $\Gamma$-converges to $F$ with respect to the narrow topology. Moreover, the unique minimizer $\pi_k^*$ of $F_k$ converges narrowly to the unique minimizer $\pi^*$ of $F$.
\end{thm}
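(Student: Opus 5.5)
Writing $\phi_1(x)=x\log x$, the plan rests on two pointwise facts about the generators: for every $x\ge 0$ the map $q\mapsto \phi_q(x)=\frac{x^q-x}{q-1}$ is nondecreasing on $(1,2]$, with $\phi_q(x)\downarrow\phi_1(x)$ as $q\downarrow1$. Monotonicity holds because $q\mapsto x^q$ is convex, so its difference quotients $\frac{x^q-x^1}{q-1}$ are nondecreasing in $q$; their limit as $q\downarrow 1$ is the derivative $x\log x$. Consequently, for every $\pi\in\Pi(\mu_1,\mu_2)$ we have $D_{q_k}(\pi,m)\ge \mathrm{KL}(\pi,m)$, and hence $F_k\ge F$ for all $k$. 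We will also use that $\Pi(\mu_1,\mu_2)$ is narrowly closed and narrowly compact, and that $\pi\mapsto\int c\,d\pi$ is narrowly continuous because $c\in C_b$.

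\emph{Liminf inequality.} Let $\pi_k\to\pi$ narrowly. We may assume $\liminf_k F_k(\pi_k)<\infty$; passing to a subsequence realizing the liminf, all $\pi_k$ then lie in $\Pi(\mu_1,\mu_2)$, and so does $\pi$ by closedness. From $F_k(\pi_k)\ge F(\pi_k)$ it suffices to show $F(\pi)\le\liminf_k F(\pi_k)$, which is the narrow lower semicontinuity of $F$. The cost term is continuous, and $\mathrm{KL}(\cdot,m)$ is narrowly lsc by the Donsker--Varadhan duality $\mathrm{KL}(\pi,m)=\sup_{g\in C_b}\{\int g\,d\pi-\log\int e^g\,dm\}$, a supremum of narrowly continuous functionals. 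Thus the liminf inequality reduces to the monotonicity above plus standard semicontinuity.

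\emph{Limsup inequality.} We use the constant recovery sequence $\pi_k=\pi$. If $F(\pi)=\infty$ there is nothing to prove, so assume $\pi\in\Pi(\mu_1,\mu_2)$ with $\mathrm{KL}(\pi,m)<\infty$, and let $\rho=d\pi/dm$. We must show $D_{q_k}(\pi,m)\to\mathrm{KL}(\pi,m)$, and this is the main obstacle. We have pointwise convergence $\phi_{q_k}(\rho)\downarrow\phi_1(\rho)$, but monotone convergence downward needs an integrable majorant, say $\int\phi_{q_1}(\rho)\,dm<\infty$, i.e.\ $\rho\in L^{q_1}(m)$. A finite-entropy $\pi$ need not satisfy this, so we will truncate. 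Set $\rho_n=\min(\rho,n)$. Since $\rho_n$ is not itself a coupling density, we correct it by a convex combination with $m$ or by a Sinkhorn-type rescaling to obtain couplings $\pi^{(n)}$ with bounded densities such that $\pi^{(n)}\to\pi$ narrowly and $\mathrm{KL}(\pi^{(n)},m)\to\mathrm{KL}(\pi,m)$. I would try first the standard block approximation or the shadow construction: put $\tilde\pi_n=\rho_n m$, restore the marginal deficit by adding $(\mu_1-\tilde\pi_n^1)\otimes(\mu_2-\tilde\pi_n^2)/(1-\tilde\pi_n(X_1\times X_2))$, and verify that the resulting density is bounded by $n+C$ and that its entropy converges by dominated convergence. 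For each fixed $n$ the density is bounded, so dominated convergence gives $D_{q}(\pi^{(n)},m)\to\mathrm{KL}(\pi^{(n)},m)$ as $q\downarrow1$. A diagonal choice $n(k)\to\infty$ then produces a recovery sequence $\pi_k=\pi^{(n(k))}$ with $\limsup_k F_k(\pi_k)\le F(\pi)$. Alternatively, a general $\Gamma$-convergence fact avoids the diagonal bookkeeping: if each $F_k$ is lsc and $F$ is the lsc envelope of the pointwise limit on a dense class, the limsup inequality holds.

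\emph{Convergence of minimizers.} Each $F_k$ has a unique minimizer $\pi_k^*$, by the strict convexity and lsc argument given for the projection problem, applied to $F_k$. By compactness of $\Pi(\mu_1,\mu_2)$, every subsequence of $(\pi_k^*)$ has a further subsequence converging narrowly to some $\bar\pi$. The fundamental theorem of $\Gamma$-convergence, using liminf at $\bar\pi$ and limsup at $\pi^*$, gives $F(\bar\pi)\le\liminf_k F_k(\pi_k^*)\le\limsup_k F_k(\pi_k)\le F(\pi^*)$. Hence $\bar\pi$ minimizes $F$, and by strict convexity of $\mathrm{KL}$ we get $\bar\pi=\pi^*$. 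Since every subsequence has a further subsequence with the same limit, the whole sequence converges: $\pi_k^*\to\pi^*$ narrowly.
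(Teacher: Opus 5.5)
Your overall architecture is the paper's. The liminf uses $D_q\ge\mathrm{KL}$ plus narrow lower semicontinuity; your convexity-in-$q$ argument is equivalent to the paper's $\log x\le\log_{2-q}x$. The limsup uses the very same corrected truncation: your shadow measure is $\frac{a_nb_n}{\delta_n}\,m$ with $a_n\mu_1=\mu_1-\tilde\pi_n^1$, $b_n\mu_2=\mu_2-\tilde\pi_n^2$, $\delta_n=1-\tilde\pi_n(X_1\times X_2)$. It then takes dominated convergence in $q$ at fixed $n$ and a diagonal extraction. Convergence of minimizers comes from compactness plus uniqueness.

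\emph{The gap.} The one step that does not work as you describe it is $\mathrm{KL}(\pi^{(n)},m)\to\mathrm{KL}(\pi,m)$ ``by dominated convergence''. The only pointwise control on the correction is $a_nb_n/\delta_n\le 1/\delta_n$, so $\rho_n\le n+1/\delta_n$. This is finite for each $n$, which is all the $q\downarrow1$ step needs. It is not of the form $n+C$ with $C$ independent of $n$, and $\|a_nb_n/\delta_n\|_\infty$ can genuinely blow up. For instance, build $\pi$ from product blocks of density $4^{j}$ on $A_j\times B_j$ with $\mu_1(A_j)=\mu_2(B_j)=4^{-j}$. At $n\approx 4^j/2$ the correction is of order $4^j$ on $A_j\times B_j$, although $\mathrm{KL}(\pi,m)<\infty$. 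So you have no evident $n$-independent integrable majorant for $\rho_n\log\rho_n$, and this step needs another idea.

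\emph{The paper's fix.} It uses convexity rather than domination. Write $\rho_n=(1-\delta_n)r_n+\delta_n s_n$ with probability densities $r_n=\tilde\rho_n/(1-\delta_n)$ and $s_n=a_nb_n/\delta_n^2$. Jensen for $t\log t$ then gives
\[
\mathrm{KL}(\pi_n,m)\le\int\tilde\rho_n\log\tilde\rho_n\,dm-(1-\delta_n)\log(1-\delta_n)+\int a_n\log a_n\,d\mu_1+\int b_n\log b_n\,d\mu_2-2\delta_n\log\delta_n ,
\]
where the product structure of $s_n$ splits its entropy into marginal terms. Dominated convergence now applies termwise: $|\tilde\rho_n\log\tilde\rho_n|\le(\rho\log\rho)_++e^{-1}$, and $0\le a_n,b_n\le 1$ with $a_n,b_n\downarrow0$ a.e. This yields $\limsup_n\mathrm{KL}(\pi_n,m)\le\mathrm{KL}(\pi,m)$, and the reverse inequality is lower semicontinuity.

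Your alternative via lower semicontinuity of the $\Gamma$-limsup does not avoid this step. It still needs bounded-density couplings to be dense in energy for $\mathrm{KL}(\cdot,m)$ on $\Pi(\mu_1,\mu_2)$, which is exactly this convergence. With this step supplied, the rest of your proposal is correct, including the Donsker--Varadhan route to lower semicontinuity and the subsequence argument for the minimizers.
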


Next, we establish quantitative estimates for the Tsallis-to-KL limit as $q\downarrow1$. The following two theorems give quantitative error estimates of order $O(q-1)$ as $q\downarrow1$. The first concerns the regularized optimal transport value, whereas the second concerns the corresponding information projection value.

\begin{thm}\label{thm;2}
Let $X_i$ be Polish spaces, let $\mu_i\in\mathcal P(X_i)$ $(i=1,2)$, and let $c\in C_b(X_1\times X_2)$ be nonnegative. Let $\ep>0$ and $1<q\leq2$. Then
\begin{equation*}
0\leq\mathrm{OT}_{q, \ep}(\mu_1, \mu_2) - \mathrm{OT}_\ep(\mu_1, \mu_2) \leq \ep H_\ep(q - 1),
\end{equation*}
where $H_\ep\geq0$ is defined by
\begin{equation*}
H_\ep = \max \left\{ \frac{25 \|c\|_\infty^2}{2\ep^2}, \left( \exp\left(\frac{5\|c\|_\infty}{\ep}\right)-1 \right)\frac{5 \|c\|_\infty}{\ep}
\right\}. 
\end{equation*}
\end{thm}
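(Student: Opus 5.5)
Write $r=d\pi/dm$. The plan is to prove the two inequalities separately: the lower bound by a pointwise comparison of generators, and the upper bound by testing $F_q$ at the explicit KL minimizer $\pi^*$, whose density is bounded above and below.

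\emph{Lower bound.} The function $x\mapsto \log_{2-q}x=(x^{q-1}-1)/(q-1)$ dominates $\log x$ for $x>0$, since $e^{(q-1)t}-1\ge (q-1)t$ with $t=\log x$. Hence $\phi_q(x)\ge x\log x$ pointwise, so $D_q(\pi,m)\ge \mathrm{KL}(\pi,m)$ for every $\pi\ll m$, while both sides are $+\infty$ otherwise. Taking the infimum over $\Pi(\mu_1,\mu_2)$ gives $\mathrm{OT}_{q,\ep}\ge\mathrm{OT}_\ep$.

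\emph{Upper bound.} We use the KL minimizer $\pi^*=r^*m$. By the Schr\"odinger system (standard for bounded $c$), $r^*=e^{(f\oplus g-c)/\ep}$ with bounded potentials, which we normalize so that
\begin{equation*}
e^{-2\|c\|_\infty/\ep}\le r^*\le e^{2\|c\|_\infty/\ep}.
\end{equation*}
This follows from the Schr\"odinger equations $e^{-f/\ep}=\int e^{(g-c)/\ep}d\mu_2$ and the symmetric one for $g$: each potential has oscillation at most $\|c\|_\infty$, and after normalization $|f\oplus g|\le 2\|c\|_\infty$ or so. Consequently $|\log r^*|\le L$ for an explicit constant $L$ that is a multiple of $\|c\|_\infty/\ep$; careless bookkeeping seems to give the $5$ that appears in $H_\ep$. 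Then
\begin{equation*}
\mathrm{OT}_{q,\ep}-\mathrm{OT}_\ep\le \ep\int \bigl(\phi_q(r^*)-r^*\log r^*\bigr)\,dm .
\end{equation*}

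Set $s=q-1\in(0,1]$ and $t=\log r^*\in[-L,L]$. The integrand is
\begin{equation*}
r^*\Bigl(\frac{e^{st}-1}{s}-t\Bigr)=r^*\,\frac{e^{st}-1-st}{s}.
\end{equation*}
We bound this by $s$ times a constant, using Taylor's theorem in two regimes.
\begin{itemize}
\item If $t\le 0$, then $e^{st}-1-st\le s^2t^2/2$. Combined with $r^*\le 1$ on this set, the integrand is at most $sL^2/2$; this gives the $25\|c\|_\infty^2/(2\ep^2)$ term when $L=5\|c\|_\infty/\ep$.
\item If $t>0$, we use convexity of $s\mapsto e^{st}-1-st$, which vanishes at $s=0$ and is convex, so it is at most $s^2(e^t-1-t)$ for $s\le1$. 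Together with $\int r^*\,dm=1$, this produces a bound of the form $s(e^{L}-1)L$, matching the second entry of $H_\ep$.
\end{itemize}
Integrating against $dm$ and taking the maximum of the two constants gives $\ep H_\ep(q-1)$. The second case may instead need the cruder estimate $e^t-1-t\le (e^L-1)L$ together with $\int r^*dm=1$; I would adjust constants accordingly.

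The main obstacle is the a priori $L^\infty$ bound on $\log r^*$ with an explicit constant. I would take it from the Schr\"odinger equations as above. If that proves delicate in the Polish setting, the fallback is to compare with a discretized coupling. Alternatively, one could avoid $\pi^*$ and bound $\mathrm{OT}_{q,\ep}$ by testing an explicit coupling obtained by one Sinkhorn step from $K_\ep$, whose density is bounded by $e^{\pm\|c\|_\infty/\ep}$ factors. That coupling would not be optimal for $\mathrm{OT}_\ep$, though, so the real route is the minimizer with bounded potentials.
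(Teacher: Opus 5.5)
Your plan follows the paper's proof. The lower bound comes from $\log_{2-q}x\ge\log x$. The upper bound comes from evaluating the Tsallis functional at the KL optimizer $\pi^*$. The key estimate is a two-regime pointwise bound on $\log_{2-q}x-\log x$ split at $x=1$, combined with an $L^\infty$ bound on $\log r^*$ from the Schr\"odinger potentials.

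On the $L^\infty$ bound: your $|\log r^*|\le 2\|c\|_\infty/\ep$ is in fact correct. It follows from writing $r^*(x,y)=e^{(g(y)-c(x,y))/\ep}\big/\int e^{(g-c(x,\cdot))/\ep}\,d\mu_2$ and using that $g$ has oscillation at most $\|c\|_\infty$. It is sharper than the paper's $5\|c\|_\infty/\ep$, which the paper gets by citing $\|u_\ep\|_\infty,\|v_\ep\|_\infty\le 2\|c\|_\infty$. Since $L\mapsto L^2/2$ and $L\mapsto(e^L-1)L$ are increasing, any $L\le 5\|c\|_\infty/\ep$ gives the stated $H_\ep$, so no ``careless bookkeeping'' is needed.

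The one step that fails as justified is the case $t>0$. For $g(s)=e^{st}-1-st$, convexity together with $g(0)=0$ only gives the chord bound $g(s)\le s\,g(1)$. After dividing by $s$, this leaves $e^t-1-t$ with no factor $q-1$, so the $O(q-1)$ rate is lost. Adding $g'(0)=0$ does not rescue the argument: $(s-a)_+$ with small $a>0$ is convex with $g(0)=g'(0)=0$, yet violates $g(s)\le s^2g(1)$ at $s=2a$.

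The inequality you want is nevertheless true for $t\ge0$, for a different reason. Either note that $g(s)=\sum_{k\ge2}s^kt^k/k!$ has nonnegative coefficients, or use that $g''(s)=t^2e^{st}$ is nondecreasing in $s$, so that for $0<s\le1$
\begin{equation*}
g(s)=s^2\int_0^1(1-\tau)\,t^2e^{s\tau t}\,d\tau\le s^2\int_0^1(1-\tau)\,t^2e^{\tau t}\,d\tau=s^2g(1).
\end{equation*}
With this fix, your bound $x-1-\log x$ for $x\ge1$ is slightly better than the paper's $(x-1)\log x$. The paper obtains its bound instead from $e^u-1-u\le u(e^u-1)$ and $e^{st}-1\le s(e^t-1)$.

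To justify ``taking the maximum'' cleanly, bound $\log_{2-q}r^*-\log r^*$ pointwise by $(q-1)\max\{L^2/2,(e^L-1)L\}$ and integrate against the probability measure $\pi^*$, as the paper does. Your extra use of $r^*\le1$ on $\{t\le0\}$ is then unnecessary.
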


\begin{thm}\label{thm;3}
Under the assumptions of Theorem~\ref{thm;2},
\begin{align*}
\left|\ep \inf_{\pi \in \Pi(\mu_1, \mu_2)} D_q(\pi, K_{q,\ep})
-\ep \inf_{\pi \in \Pi(\mu_1, \mu_2)} \mathrm{KL}(\pi, K_{\ep})\right|
\le C_\ep (q-1),
\end{align*}
where the $q$-independent constant $C_\ep\geq0$ is defined by
\begin{equation*}
C_\ep = \ep H_\ep+\|c\|_\infty\exp\left(\frac{5\|c\|_\infty}{\ep}\right).
\end{equation*}
\end{thm}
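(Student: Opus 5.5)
We reduce Theorem~\ref{thm;3} to Theorem~\ref{thm;2} through Proposition~\ref{prop1} and the KL identity \eqref{KLproj pb}. By \eqref{KLproj pb}, the second infimum equals $\mathrm{OT}_\ep(\mu_1,\mu_2)$. So it suffices to compare $P_q:=\ep\inf_{\pi}D_q(\pi,K_{q,\ep})$ with $\mathrm{OT}_{q,\ep}(\mu_1,\mu_2)$ up to an error $\|c\|_\infty e^{5\|c\|_\infty/\ep}(q-1)$. The triangle inequality with Theorem~\ref{thm;2} then gives the constant $C_\ep=\ep H_\ep+\|c\|_\infty e^{5\|c\|_\infty/\ep}$.

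\emph{Step 1: comparing the two functionals.} For $\pi$ with $D_q(\pi,m)<\infty$ (equivalently $D_q(\pi,K_{q,\ep})<\infty$, since the density ratios are bounded), Proposition~\ref{prop1} shows that the two objectives differ pointwise by
\[
R_q(\pi)=(q-1)\int c\,\log_{2-q}\Big(\frac{d\pi}{dm}\Big)\,d\pi .
\]
Since $\log_{2-q}\ge -1/(q-1)$ and $c\ge0$, we get $R_q(\pi)\ge -\int c\,d\pi\ge-\|c\|_\infty$. This crude lower bound is not of order $q-1$, so the error must be controlled differently. We will use $|\log_{2-q}x|\le |\log x|$-type bounds only on the relevant minimizers, where the density is bounded above and below.

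\emph{Step 2: density bounds for the minimizers.} This is the main obstacle. We aim to show that the minimizer $\pi_q^*$ of $\mathrm{OT}_{q,\ep}$ and the minimizer $\tilde\pi_q$ of the projection both have densities w.r.t.\ $m$ bounded between $e^{-a}$ and $e^{a}$, with $a$ a multiple of $\|c\|_\infty/\ep$ (at most $5\|c\|_\infty/\ep$), uniformly in $q\in(1,2]$. The plan is to use the first-order (dual) optimality conditions. The conditions take the form
\[
\frac{d\pi}{dm}=\Big(\tfrac{q-1}{q}\big(f\oplus g - c/\ep\big)+\tfrac1q\Big)_+^{1/(q-1)},
\]
and similarly for the projection with $c$ replaced by the $\exp_q$ kernel. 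We then bound the oscillation of the potentials $f,g$ by $\|c\|_\infty/\ep$ using the marginal constraints, as in Sinkhorn potential estimates. The factor $5$ in $H_\ep$ suggests that this is exactly the estimate already produced in the proof of Theorem~\ref{thm;2}, so we will reuse it.

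\emph{Step 3: conclusion.} Given the bounds of Step 2, write $x=d\pi/dm\in[e^{-a},e^{a}]$. The mean value theorem gives $|\log_{2-q}x|=|x^{q-1}-1|/(q-1)\le a\,e^{a(q-1)}\le a e^{a}$. For $\pi$ the optimal coupling, Proposition~\ref{prop1} then yields $P_q\le \mathrm{OT}_{q,\ep}+(q-1)\|c\|_\infty\int|\log_{2-q}x|\,d\pi$. Symmetrically, evaluating Proposition~\ref{prop1}'s functional at $\tilde\pi_q$ gives $\mathrm{OT}_{q,\ep}\le P_q+(q-1)\|c\|_\infty\int|\log_{2-q}\tilde x|\,d\tilde\pi_q$. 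This step only needs $\sup|\log_{2-q}|$ on the density ranges. We aim to arrange the final constant as $\|c\|_\infty e^{5\|c\|_\infty/\ep}$, possibly by instead bounding $|\log_{2-q}x|\le x^{q-1}\,|\log x|$ and $x\le e^{5\|c\|_\infty/\ep}$ directly. Combining the resulting bound $|P_q-\mathrm{OT}_{q,\ep}|\le (q-1)\|c\|_\infty e^{5\|c\|_\infty/\ep}$ with Theorem~\ref{thm;2} finishes the proof.
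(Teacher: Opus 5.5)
\paragraph{Where the argument breaks.} Your reduction is the right one. The upper direction $P_q\le \mathrm{OT}_{q,\ep}+(q-1)\int c\log_{2-q}(d\pi^*_{q,\ep}/dm)\,d\pi^*_{q,\ep}$, followed by Theorem~\ref{thm;2}, is exactly the paper's argument.

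The gap is Step~2, on which your Step~3 rests. You claim a uniform-in-$q$ two-sided bound $e^{-a}\le d\pi/dm\le e^{a}$ for $\pi^*_{q,\ep}$ and for the projection minimizer. This is not proved, and for $\pi^*_{q,\ep}$ the lower half is false in general.

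\begin{itemize}
\item The bound $e^{\pm 5\|c\|_\infty/\ep}$ used in Theorem~\ref{thm;2} concerns only the KL minimizer $\pi^*_\ep$, whose density is an exponential of the potentials. For the Tsallis problem the density is $\bigl(\frac{q-1}{q}\frac{u_q+v_q-c}{\ep}+1\bigr)_+^{1/(q-1)}$, and the positive part in your own optimality formula can be active.
\item Concretely, take $q=2$, $\mu_1=\mu_2=\frac12(\delta_1+\delta_2)$, $c=L\,1_{\{i\neq j\}}$, and couplings with diagonal mass $t$. The objective is $L(1-2t)+\ep(4t-1)^2$, which is minimized at $t=\frac12$ whenever $L\ge 4\ep$. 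So the optimal density vanishes off the diagonal.
\item Once the density range contains $0$, $\sup|\log_{2-q}|$ over it is at least $1/(q-1)$, which destroys the $O(q-1)$ rate in your Step~3.
\item For the projection minimizer, the weight $1+(q-1)c/\ep$ changes the problem, and no potential estimate for it is available from the cited results.
\item Even granting two-sided bounds, your mean-value estimate $|\log_{2-q}x|\le a e^{a}$ yields $\|c\|_\infty a e^{a}$, not the constant $\|c\|_\infty e^{a}$ appearing in $C_\ep$.
\end{itemize}

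\paragraph{The missing idea.} No lower density bound is needed, because $c\ge 0$ lets you discard one sign of $\log_{2-q}$ in each direction.

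For the upper direction, only $(\log_{2-q})_+$ contributes. Write $\rho^*_q=d\pi^*_{q,\ep}/dm$. Since $\log_{2-q}x\le x-1$ for $x\ge 1$ and $q\le 2$, you get $\int c\log_{2-q}(\rho^*_q)\,d\pi^*_{q,\ep}\le \|c\|_\infty\int (\rho^*_q)^2\,dm\le \|c\|_\infty\|\rho^*_q\|_\infty\le \|c\|_\infty e^{5\|c\|_\infty/\ep}$. This uses only the upper bound of Proposition~\ref{prop:uniform-Linfty-density}.

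For the lower direction, take any admissible $\pi=\rho m$. Then $-(q-1)\int c\log_{2-q}\rho\,d\pi=\int c(\rho-\rho^q)\,dm\le (q-1)\|c\|_\infty$, because $\max_{t\ge 0}(t-t^q)=(q-1)q^{-q/(q-1)}$. Proposition~\ref{prop1} then gives $P_q\ge \mathrm{OT}_{q,\ep}-(q-1)\|c\|_\infty\ge \mathrm{OT}_\ep-(q-1)\|c\|_\infty$. This is the paper's lower bound, which it obtains by directly expanding $\ep D_q(\pi,K_{q,\ep})-\ep\mathrm{KL}(\pi,K_\ep)$. Since $C_\ep\ge\|c\|_\infty$, this finishes the proof, and the projection minimizer $\tilde\pi_q$ is not needed at all.
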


The paper is organized as follows: Section~\ref{pre} recalls the basic properties of the $q$-logarithm and $q$-exponential functions and the Schr\"odinger-potential estimates used below. In Section~\ref{sec:Gamma convergence}, we prove $\Gamma$-convergence of the Tsallis-regularized functionals to the KL-regularized functional and narrow convergence of the minimizers as $q\downarrow1$. In Section~\ref{sec:error estimate}, we establish error estimates of order $O(q-1)$ and prove Theorems~\ref{thm;2} and~\ref{thm;3}.

\section{Preliminaries}\label{pre}

This section collects basic properties of the $q$-logarithm and $q$-exponential functions and bounds for the associated Schr\"odinger potentials.

\subsection{Basic properties of the $q$-logarithm and $q$-exponential functions}
We recall basic properties of the $q$-logarithm and $q$-exponential functions, which play a central role in Tsallis statistical mechanics \cite{Naudts2011, tsallisbook2009} and in the information-geometric study of deformed exponential families \cite{AmariOhara2011}. The $q$-logarithm and $q$-exponential functions \eqref{q-log and q-exp} satisfy the following elementary identities whenever the expressions are defined. In particular, for $x\in I_q$,
\begin{equation*}
\log_q(\exp_q x)=x,\qquad \log_{2-q}\bigl((\exp_q x)^{-1}\bigr)=-x.
\end{equation*}
For $x,y>0$, the $q$-logarithm has the non-additivity property
\begin{equation*}
\log_q(xy)
= \log_q x+\log_q y+(1-q)\log_q x\,\log_q y.
\end{equation*}
Equivalently,
\begin{equation}\label{eq;non-additivity property}
\log_{2-q}(xy) = \log_{2-q}x+\log_{2-q}y+(q-1)\log_{2-q}x\,\log_{2-q}y.
\end{equation}

The Tsallis relative entropy can be expressed in terms of the $q$-logarithm as
\begin{equation*}
D_q(\mu,\nu)
=
\int \frac{d\mu}{d\nu}\,\log_{2-q}\!\left(\frac{d\mu}{d\nu}\right)\,d\nu,
\end{equation*}
whenever $\mu\ll \nu$ and $q>1$. This representation is particularly convenient in the present setting.

Combining the identity
\begin{equation*}
\log_{2-q}\bigl(\exp_q(-x)^{-1}\bigr)=x
\qquad (-x\in I_q)
\end{equation*}
with the non-additivity formula \eqref{eq;non-additivity property} yields the relation between the Tsallis projection problem \eqref{Tsallis projection pb} and the Tsallis-regularized optimal transport problem \eqref{Tsallis-reg OT}.

\begin{pf}{Proposition \ref{prop1}}
Fix $\pi\in\Pi(\mu_1,\mu_2)$ such that $D_q(\pi,m)<+\infty$. Then $\pi\ll m$, and we set
\begin{equation*}
\rho = \frac{d\pi}{dm}.
\end{equation*}
Since $D_q(\pi,m)<+\infty$, we have
\begin{equation*}
\int_{X_1\times X_2}\rho^q\,dm = 1+(q-1)D_q(\pi,m) <+\infty.
\end{equation*}
By the definition of the Tsallis kernel,
\begin{equation*}
\frac{dK_{q,\ep}}{dm} = \left(1+(q-1)\frac{c}{\ep}\right)^{-\frac{1}{q-1}}, \qquad \frac{dm}{dK_{q,\ep}} = \left(1+(q-1)\frac{c}{\ep}\right)^{\frac{1}{q-1}}.
\end{equation*}
Since $c$ is bounded and nonnegative, both density ratios are bounded. Hence $\pi\ll K_{q,\ep}$ and
\begin{align*}
\int_{X_1\times X_2} \left(\frac{d\pi}{dK_{q,\ep}}\right)^q\,dK_{q,\ep}
&= \int_{X_1\times X_2} \rho^q \left(\frac{dm}{dK_{q,\ep}}\right)^{q-1}\,dm
<+\infty.
\end{align*}
Therefore, $D_q(\pi,K_{q,\ep})<+\infty$. Moreover,
\begin{align*}
\int_{X_1\times X_2} \left|\log_{2-q}\rho\right|\,d\pi
&= \frac{1}{q-1} \int_{X_1\times X_2} \rho\left|\rho^{q-1}-1\right|\,dm \\
&\leq \frac{1}{q-1} \int_{X_1\times X_2} \left(\rho^q+\rho\right)\,dm <+\infty.
\end{align*}
Since $c$ is bounded, the correction term in \eqref{eq;prop1} is also finite. Thus, all the terms below are well defined.

We have
\begin{equation*}
\log_{2-q}\left(\frac{dm}{dK_{q,\ep}}\right) = \log_{2-q}\left( \left(1+(q-1)\frac{c}{\ep}\right)^{\frac{1}{q-1}} \right) = \frac{c}{\ep}.
\end{equation*}
Furthermore,
\begin{equation*}
\frac{d\pi}{dK_{q,\ep}} = \frac{d\pi}{dm}\frac{dm}{dK_{q,\ep}}.
\end{equation*}
Applying the non-additivity formula \eqref{eq;non-additivity property} on the set $\{\rho>0\}$, which has full $\pi$-measure, we obtain
\begin{align*}
\ep D_q(\pi,K_{q,\ep}) &= \ep\int_{X_1\times X_2} \log_{2-q}\left(\frac{d\pi}{dK_{q,\ep}}\right)\,d\pi \\
&= \ep\int_{X_1\times X_2} \log_{2-q}\left(\frac{d\pi}{dm}\right)\,d\pi + \ep\int_{X_1\times X_2} \log_{2-q}\left(\frac{dm}{dK_{q,\ep}}\right)\,d\pi \\
&\quad + \ep(q-1)\int_{X_1\times X_2} \log_{2-q}\left(\frac{d\pi}{dm}\right) \log_{2-q}\left(\frac{dm}{dK_{q,\ep}}\right)\,d\pi \\
&= \ep D_q(\pi,m) + \int_{X_1\times X_2}c\,d\pi + (q-1)\int_{X_1\times X_2}
c\log_{2-q}\left(\frac{d\pi}{dm}\right)\,d\pi.
\end{align*}
Rearranging this identity yields
\begin{equation*}
\int_{X_1\times X_2}c\,d\pi+\ep D_q(\pi,m)
= \ep D_q(\pi,K_{q,\ep}) - (q-1)\int_{X_1\times X_2} c\log_{2-q}\left(\frac{d\pi}{dm}\right)\,d\pi.
\end{equation*}

This identity holds for every $\pi\in\Pi(\mu_1,\mu_2)$ satisfying $D_q(\pi,m)< +\infty$. If $D_q(\pi,m) = +\infty$, then the objective functional in \eqref{Tsallis-reg OT} equals $+\infty$. Moreover, $m\in\Pi(\mu_1,\mu_2)$, $D_q(m,m)=0$, and $c$ is bounded, so \eqref{Tsallis-reg OT} has a finite admissible competitor. Therefore, restricting the infimum to couplings satisfying $D_q(\pi,m) < +\infty$ does not change its value. Taking the infimum in the preceding identity, we obtain
\begin{equation*}
\mathrm{OT}_{q,\ep}(\mu_1,\mu_2) = \inf_{\substack{\pi\in\Pi(\mu_1,\mu_2)\\D_q(\pi,m)<+\infty}} \left\{ \ep D_q(\pi,K_{q,\ep}) - (q-1)\int_{X_1\times X_2} c\log_{2-q}\left(\frac{d\pi}{dm}\right)\,d\pi \right\}.
\end{equation*}
This proves \eqref{eq;prop1}.
\end{pf}

\subsection{Schr\"odinger potentials for Tsallis-regularized optimal transport}

In this subsection, we recall the dual formulation of the Tsallis-regularized optimal transport problem \eqref{Tsallis-reg OT} and basic properties of the associated Schr\"odinger potentials. These facts will be used later to compare Tsallis and KL regularization. 

For a convex function $f$, we define its Legendre transform $f^*$ by
\begin{equation*}
f^*(y) = \sup_{x \geq 0} \{x y - f(x)\}
\qquad (y \in \R).
\end{equation*}
If we set
\begin{equation*}
\Phi_q(x) = \phi_q(x)-x+1=\frac{x^q - x}{q - 1} - x + 1\quad (x \geq 0),
\end{equation*}
then $\Phi_q(1)=\Phi_q'(1)=0$ and
\begin{align*}
  &\int_{X_1 \times X_2} \Phi_q\left( \frac{d\pi}{dm} \right)\, dm \\
  &\quad = \int_{X_1 \times X_2} \left(\frac{\left(\frac{d\pi}{dm}\right)^q - \frac{d\pi}{dm}}{q - 1} - \frac{d\pi}{dm} + 1\right)\, dm \\
  &\quad 
  = D_q(\pi, m).
\end{align*}
Moreover, the Legendre transform $\Psi_q=\Phi_q^*$ and its derivative are given by
\begin{equation*}
\Psi_q(y) = \left(\frac{q-1}{q}y+1\right)_+^{\frac{q}{q-1}}-1, \qquad 
\Psi_q'(y) = \left(\frac{q-1}{q}y+1\right)_+^{\frac{1}{q-1}},
\end{equation*}
where $z_+ =\max\{z,0\}$. 

With this notation, the Tsallis-regularized optimal transport problem is written as
\begin{equation*}
\mathrm{OT}_{q, \ep}(\mu_1, \mu_2) = \inf_{\pi \in \Pi(\mu_1, \mu_2)}
\left\{ \int_{X_1 \times X_2} c\, d\pi + \ep \int_{X_1 \times X_2} \Phi_q\left( \frac{d\pi}{dm} \right)\, dm \right\},
\end{equation*}
and its dual problem is given by
\begin{equation}\label{prob;dual}
\sup_{\substack{u \in L^{q'}(\mu_1), \\ v \in L^{q'}(\mu_2)}}
\left\{
\int_{X_1} u\, d\mu_1
+ \int_{X_2} v\, d\mu_2
- \ep \int_{X_1 \times X_2}
\Psi_q\left( \frac{u(x) + v(y) - c(x,y)}{\ep} \right)\, dm
\right\},
\end{equation}
where we set $q' = q/(q - 1)$. 
We call a maximizer $(u_q, v_q)$ of \eqref{prob;dual} a pair of Schr\"odinger potentials associated with \eqref{Tsallis-reg OT}.

We next recall the existence of dual maximizers and the representation of the optimal coupling. 

\begin{thm}[{\cite[Lemma 3.3, Theorem 3.4]{di2020optimal}}]\label{thm;estimates of potentials}
Let $1<q\leq2$, let $c:X_1\times X_2\to\R$ be bounded, and let $\ep>0$. Then the dual problem \eqref{prob;dual} admits a maximizer. Moreover, one can choose a maximizer satisfying
\begin{equation*}
\|u_q\|_{\infty}, \ \|v_q\|_{\infty} \leq 2 \|c\|_{\infty}.
\end{equation*}
\end{thm}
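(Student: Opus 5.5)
This result is quoted from \cite[Lemma 3.3, Theorem 3.4]{di2020optimal}. To reprove it I would use the standard ``alternating $c$-transform'' strategy for entropic duals, adapted to $\Psi_q$. The key structural facts are that $\Psi_q$ is convex and $C^1$ on $\R$ (using $1<q\le2$). Moreover $\Psi_q'$ is nondecreasing, strictly increasing where positive, and satisfies $\Psi_q'(0)=1$ and $\Psi_q'(y)\to+\infty$ as $y\to+\infty$.

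\textbf{Step 1: the $v$-update.} Fix $u\in L^{q'}(\mu_1)$. The dual functional in \eqref{prob;dual} is concave in $v$ and decouples in $y$. Maximizing pointwise in $y$ gives the first-order condition
\begin{equation*}
\int_{X_1}\Psi_q'\left(\frac{u(x)+v(y)-c(x,y)}{\ep}\right)d\mu_1(x)=1 .
\end{equation*}
For each $y$, the left-hand side is continuous and nondecreasing in $v(y)$. It is strictly increasing on the range where it is positive, tends to $+\infty$, and is $<1$ for very negative $v(y)$. Hence there is a unique solution $v=T u$.

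Since $\Psi_q'\ge1$ exactly when its argument is nonnegative, this solution satisfies the two-sided bound
\begin{equation*}
\operatorname*{ess\,inf}_x\bigl(c(x,y)-u(x)\bigr)\le Tu(y)\le \operatorname*{ess\,sup}_x\bigl(c(x,y)-u(x)\bigr).
\end{equation*}
In particular $-\sup u\le Tu\le \|c\|_\infty-\inf u$. The symmetric update $u=Sv$ obeys the analogous bounds. Both updates do not decrease the dual value, and the functional is invariant under $(u,v)\mapsto(u+a,v-a)$.

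\textbf{Step 2: a uniformly bounded maximizing sequence.} Take any maximizing sequence and replace it by $(u_n,v_n)=(S(T\tilde u_n),\,T\tilde u_n)$, normalized by an additive shift. The shift is chosen so that, say, $\operatorname*{ess\,sup} v_n+\operatorname*{ess\,inf}u_n$ is centred (for instance $\operatorname*{ess\,inf}u_n=0$ after a suitable choice). The bounds of Step~1, applied twice, then give
\begin{equation*}
\|u_n\|_\infty,\ \|v_n\|_\infty\le 2\|c\|_\infty .
\end{equation*}
This bookkeeping is the step I expect to be the main obstacle. One-sided bounds chain easily, but closing them into a $2\|c\|_\infty$ bound requires the right normalization. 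It also uses that, after a full double update, both marginal equations hold simultaneously. Consequently, for a.e.\ $x$ there exist $y$ with $u(x)+v(y)\ge c(x,y)\ge0$ and $y'$ with $u(x)+v(y')\le c(x,y')\le\|c\|_\infty$, and symmetrically in $y$. I would first try to derive $\operatorname{osc}(u)\le\|c\|_\infty$ and $\operatorname{osc}(v)\le\|c\|_\infty$ from these inequalities. Together with the centring, these oscillation bounds would yield the claimed constant.

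\textbf{Step 3: passage to the limit.} The sequence is bounded in $L^\infty\subset L^{q'}$. I would extract weak-$*$ convergent subsequences $u_n\rightharpoonup u_q$ and $v_n\rightharpoonup v_q$. The limits obey the same bounds, by weak-$*$ lower semicontinuity of the norm. The dual functional is concave and strongly continuous on $L^{q'}$, so it is weakly upper semicontinuous; the linear terms pass to the limit directly. Hence $(u_q,v_q)$ attains the supremum, which proves existence of a maximizer with $\|u_q\|_\infty,\|v_q\|_\infty\le2\|c\|_\infty$.
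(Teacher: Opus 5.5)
Your Steps 1 and 3 are sound, and the overall scheme (transforms $T$ and $S$, monotonicity of the dual value, shift invariance, then weak compactness and weak upper semicontinuity of the concave, strongly continuous dual functional) is the standard one. The paper gives no proof here beyond the citation. The gap is in Step 2, which is where the content of the theorem lies.

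The Step 1 sandwich bounds, applied twice, only give $\operatorname{osc}(v_n)\le\|c\|_\infty+\operatorname{osc}(\tilde u_n)$ and $\operatorname{osc}(u_n)\le\|c\|_\infty+\operatorname{osc}(v_n)$. This is not uniform in $n$, and it is vacuous if $\tilde u_n$ is unbounded. Your proposed repair fails for two reasons.

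\textbf{First,} after the update $(S(T\tilde u_n),T\tilde u_n)$ only the $x$-marginal equation holds. The $y$-marginal equation holds with $\tilde u_n$ in place of $u_n$.

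\textbf{Second, and decisively,} even when both equations hold, the sign information you extract cannot bound either oscillation. The existence of $y$ with $u(x)+v(y)\ge c(x,y)$, and so on, only yields $|\operatorname{osc}(u)-\operatorname{osc}(v)|\le\|c\|_\infty$. For example, take $c\equiv0$ on $X_1=X_2=\{1,2\}$ with uniform weights. Then $u=v=(M,-M)$ satisfies all of your existential inequalities for every $M>0$.

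What is missing is a comparison estimate for the transform at two different points. It uses the full integral equation rather than its sign content. Fix $u\in L^{q'}(\mu_1)$, and put $g_y(t)=\int_{X_1}\Psi_q'\bigl((u(x)+t-c(x,y))/\ep\bigr)\,d\mu_1(x)$ and $\delta=\sup_x\bigl(c(x,y')-c(x,y)\bigr)$. Then $u(x)+Tu(y)+\delta-c(x,y')\ge u(x)+Tu(y)-c(x,y)$ for all $x$. Since $\Psi_q'$ is nondecreasing, this gives $g_{y'}(Tu(y)+\delta)\ge g_y(Tu(y))=1=g_{y'}(Tu(y'))$. Strict monotonicity of $g_{y'}$ where it is positive then forces $Tu(y')\le Tu(y)+\delta$.

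Hence $\operatorname{osc}(Tu)\le\sup_{x,y,y'}\bigl(c(x,y')-c(x,y)\bigr)$, \emph{independently of $u$}. This is at most $\|c\|_\infty$ if $c\ge0$ and at most $2\|c\|_\infty$ in general. In particular $Tu\in L^\infty(\mu_2)$, and the same holds for $S$.

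Now shift $(u_n,v_n)$ so that $v_n=T\tilde u_n$ is centred, which gives $\|v_n\|_\infty\le\frac12\operatorname{osc}(v_n)\le\|c\|_\infty$. The Step 1 sandwich for $u_n=Sv_n$ then gives $\|u_n\|_\infty\le\|c\|_\infty+\|v_n\|_\infty\le2\|c\|_\infty$. No second marginal equation is needed.

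This version also covers signed bounded $c$, as the statement allows; your bound $-\sup u\le Tu\le\|c\|_\infty-\inf u$ uses $c\ge0$. With these bounds your Step 3 goes through. Just note that weak-$*$ convergence in $L^\infty$ implies weak convergence in $L^{q'}$ on probability spaces, which is what your upper semicontinuity argument needs.
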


\begin{thm}[{\cite[Proposition 3.5, Theorem 3.6]{di2020optimal}}]
Let $1<q\leq2$ and let $(u_q,v_q)$ be a maximizer of the dual problem \eqref{prob;dual}. Then the optimal coupling $\pi_{q,\ep}^*$ for \eqref{Tsallis-reg OT} is represented by
\begin{equation}\label{eq:piq-density}
\frac{d\pi_{q, \ep}^*}{dm} = \Psi_q'\left( \frac{u_q(x) + v_q(y) - c(x,y)}{\ep} \right) = \left(\frac{q - 1}{q} \frac{u_q(x) + v_q(y) - c(x,y)}{\ep} + 1\right)_+^{\frac{1}{q - 1}}.
\end{equation}
\end{thm}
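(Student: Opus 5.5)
The plan is to prove the representation \eqref{eq:piq-density} by combining weak duality, which follows from the Fenchel--Young inequality for the pair $(\Phi_q,\Psi_q)$, with the first-order optimality conditions of the dual problem \eqref{prob;dual}. This avoids having to establish strong duality separately. Fix a maximizer $(u_q,v_q)$ and set
\begin{equation*}
z(x,y)=\frac{u_q(x)+v_q(y)-c(x,y)}{\ep}, \qquad \rho^*=\Psi_q'(z), \qquad \pi^*=\rho^* m .
\end{equation*}
Since $u_q\in L^{q'}(\mu_1)$, $v_q\in L^{q'}(\mu_2)$ and $c$ is bounded, we have $z\in L^{q'}(m)$. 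Moreover $\Psi_q'(y)\lesssim 1+|y|^{1/(q-1)}$, and $\frac{1}{q-1}\cdot q=q'$, so $\rho^*\in L^q(m)$. Hence $\Phi_q(\rho^*)$ and $\Psi_q(z)$ are both $m$-integrable.

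\emph{Step 1: marginals.} For bounded measurable $\varphi$ on $X_1$, consider $t\mapsto J(u_q+t\varphi,v_q)$, where $J$ is the dual objective. This function is concave and maximal at $t=0$. The difference quotients $\bigl(\Psi_q(z+t\varphi/\ep)-\Psi_q(z)\bigr)/t$ are monotone in $t$ because $\Psi_q$ is convex. They are dominated by $\frac{|\varphi|}{\ep}\,\Psi_q'\bigl(|z|+\|\varphi\|_\infty/\ep\bigr)$, which lies in $L^1(m)$ by the growth bound above. By dominated convergence, $J$ is differentiable at $t=0$ with derivative
\begin{equation*}
\int\varphi\,d\mu_1-\int\varphi(x)\,\rho^*\,dm .
\end{equation*}
This derivative vanishes, so the first marginal of $\pi^*$ is $\mu_1$; the same argument in $v$ gives the second marginal. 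Taking $\varphi\equiv1$ shows that $\pi^*$ is a probability measure, so $\pi^*\in\Pi(\mu_1,\mu_2)$.

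\emph{Step 2: optimality of $\pi^*$.} For every $\rho\ge0$ and $y\in\R$, the Fenchel--Young inequality gives $\Phi_q(\rho)+\Psi_q(y)\ge\rho y$, with equality if and only if $\rho=\Psi_q'(y)$. Here $\Psi_q$ is $C^1$ because $q'>1$. Take any $\pi\in\Pi(\mu_1,\mu_2)$ with finite cost and density $\rho$. Integrate the inequality with $y=z$ against $m$ and use the marginal constraints to replace $\int(u_q+v_q)\,d\pi$ by $\int u_q\,d\mu_1+\int v_q\,d\mu_2$. This is justified since $u_q\in L^{q'}$ and $\rho\in L^q$, so H\"older's inequality applies. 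The result is weak duality:
\begin{equation*}
\int c\,d\pi+\ep\int\Phi_q(\rho)\,dm\ \ge\ J(u_q,v_q).
\end{equation*}
For $\rho=\rho^*$ the Fenchel--Young inequality is an equality pointwise. By Step 1 the same substitution is valid for $\pi^*$, so $\pi^*$ attains the lower bound $J(u_q,v_q)$. Hence $\pi^*$ is a minimizer of \eqref{Tsallis-reg OT}.

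\emph{Step 3: identification.} The primal functional is strictly convex where it is finite, since $\phi_q$ is strictly convex, exactly as argued for the projection problem in the introduction. Therefore the minimizer is unique, and $\pi^*=\pi^*_{q,\ep}$. This proves \eqref{eq:piq-density}. As a by-product, strong duality holds.

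The main obstacle is Step 1: justifying differentiation under the integral for an arbitrary maximizer in $L^{q'}$ rather than the bounded one of Theorem~\ref{thm;estimates of potentials}. I would handle it with the monotone convex difference quotients, the growth bound $\Psi_q'(y)\lesssim1+|y|^{1/(q-1)}$, and the restriction to bounded test functions $\varphi$, as indicated above.
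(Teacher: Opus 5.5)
Your argument is correct. The first variation of the dual objective in bounded directions $\varphi$ gives the marginal constraints; for $|t|\le1$ it is dominated by $\frac{\|\varphi\|_\infty}{\ep}\Psi_q'(|z|+\|\varphi\|_\infty/\ep)\in L^q(m)$. Weak duality, together with the pointwise equality case of Fenchel--Young at $\rho^*=\Psi_q'(z)$, shows that $\rho^*m$ is a minimizer. Strict convexity then identifies it with $\pi^*_{q,\ep}$.

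The paper contains no proof of this statement. It imports it from \cite[Proposition 3.5, Theorem 3.6]{di2020optimal}, alongside the existence of a bounded maximizer in Theorem~\ref{thm;estimates of potentials}. The citation buys brevity and places the Tsallis case in a duality theory that also covers the KL case invoked in the remark following Proposition~\ref{prop:uniform-Linfty-density}. Your self-contained route makes explicit several things the paper leaves implicit:
\begin{itemize}
\item It applies to an arbitrary maximizer in $L^{q'}(\mu_1)\times L^{q'}(\mu_2)$, not only to the normalized bounded pair used in Proposition~\ref{prop:uniform-Linfty-density}.
\item It never uses $q\le2$; that restriction enters only through the existence of a maximizer.
\item It yields, as by-products, the absence of a duality gap and the existence and uniqueness of the primal minimizer that the phrase ``the optimal coupling'' presupposes.
\end{itemize}

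Two cosmetic remarks. First, the monotonicity of the convex difference quotients is redundant once you have the dominating function. Second, in Step~2 the identity $\int u_q(x)\,d\pi=\int u_q\,d\mu_1$ follows from $u_q\in L^1(\mu_1)$ and the marginal condition alone, including integrability if you apply it first to $|u_q|$; H\"older is not needed there.
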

The above representation immediately yields an $L^\infty$ bound on $d\pi_{q, \ep}^*/dm$ that is uniform in $q$.
\begin{prop}\label{prop:uniform-Linfty-density}
For all $1 < q \leq 2$, the following uniform $L^\infty$ bound holds:
\begin{equation*}
\left\| \frac{d\pi_{q, \ep}^*}{dm} \right\|_{\infty}
\leq \left(1+\frac{5 (q - 1)}{q \ep} \|c\|_{\infty}\right)^{\frac{1}{q - 1}}
\leq \exp\left(\frac{5\|c\|_\infty}{\ep}\right).
\end{equation*}
\end{prop}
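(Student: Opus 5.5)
We start from the density representation \eqref{eq:piq-density} of the optimal coupling and take for $(u_q,v_q)$ a maximizer of \eqref{prob;dual} satisfying the bounds of Theorem~\ref{thm;estimates of potentials}, namely $\|u_q\|_\infty,\|v_q\|_\infty\le 2\|c\|_\infty$. Since the optimal coupling $\pi^*_{q,\ep}$ is unique (by strict convexity of $D_q(\cdot,m)$ on couplings with finite entropy, exactly as argued for the projection problem in the introduction), this density formula holds for that choice of potentials.

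For the argument of $\Psi_q'$ we use that $c\ge0$, so $-c\le0$. Then
\begin{equation*}
u_q(x)+v_q(y)-c(x,y)\le 4\|c\|_\infty \le 5\|c\|_\infty
\end{equation*}
pointwise. The map $t\mapsto \bigl(\frac{q-1}{q}\frac{t}{\ep}+1\bigr)_+^{1/(q-1)}$ is nondecreasing, so $m$-a.e.
\begin{equation*}
\frac{d\pi^*_{q,\ep}}{dm}\le \left(1+\frac{5(q-1)}{q\ep}\|c\|_\infty\right)^{\frac{1}{q-1}}.
\end{equation*}
Using $4$ would already suffice; the constant $5$ in the statement is a harmless slack. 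This is the first inequality.

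For the second inequality we apply $\log(1+z)\le z$ for $z\ge0$ and obtain
\begin{equation*}
\left(1+\frac{5(q-1)}{q\ep}\|c\|_\infty\right)^{\frac{1}{q-1}} = \exp\left(\frac{1}{q-1}\log\left(1+\frac{5(q-1)\|c\|_\infty}{q\ep}\right)\right)\le \exp\left(\frac{5\|c\|_\infty}{q\ep}\right).
\end{equation*}
Since $q>1$, this is at most $\exp(5\|c\|_\infty/\ep)$.

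The only delicate point is justifying that the density formula may be used with the specific bounded maximizer of Theorem~\ref{thm;estimates of potentials}. The cited representation theorem holds for an arbitrary maximizer, and the optimal coupling is unique, so this is immediate. The remaining steps are routine monotonicity estimates.
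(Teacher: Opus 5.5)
Your proof is correct and is essentially the paper's argument: bound $u_q+v_q-c$ from above using the $2\|c\|_\infty$ potential estimates, apply monotonicity of $\Psi_q'$ in the representation \eqref{eq:piq-density}, and finish with $1+t\le e^t$ (equivalently $\log(1+z)\le z$). Using $c\ge 0$ to get $4\|c\|_\infty$, and the intermediate bound $\exp\bigl(5\|c\|_\infty/(q\ep)\bigr)$, are only minor sharpenings of the same estimates.
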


\begin{pf}{Proposition \ref{prop:uniform-Linfty-density}}
By Theorem \ref{thm;estimates of potentials}, we have
\begin{equation*}
|u_q(x)| \leq 2 \|c\|_{\infty}, \quad |v_q(y)| \leq 2 \|c\|_{\infty}, \quad \mathrm{and}\quad |c(x,y)| \leq \|c\|_{\infty}.
\end{equation*}
Hence,
\begin{equation*}
u_q(x) + v_q(y) - c(x,y) \leq 5 \|c\|_{\infty}.
\end{equation*}
Therefore, by \eqref{eq:piq-density},
\begin{equation*}
\frac{d\pi_{q, \ep}^*}{dm} \leq \left(1+\frac{q - 1}{q} \frac{5 \|c\|_{\infty}}{\ep}\right)^{\frac{1}{q - 1}} \leq\exp\left(\frac{5\|c\|_\infty}{\ep}\right),
\end{equation*}
where we used $1+t\leq e^t$ for $t\geq0$ and $q>1$.
\end{pf}

\begin{rem}
For comparison, in the KL case, we have
\begin{equation*}
\Phi(x) = x \log x -x + 1,
\qquad
\Psi(y) = e^y - 1.
\end{equation*}
The optimal coupling $\pi_\ep^*$ for \eqref{KL-reg OT} is represented by
\begin{equation*}
\frac{d\pi_\ep^*}{dm}
= \exp\left( \frac{u_\ep(x) + v_\ep(y) - c(x,y)}{\ep} \right),
\end{equation*}
where $(u_\ep, v_\ep)$ is a pair of Schr\"odinger potentials for $\mathrm{OT}_\ep$. The KL case admits the same normalization $\|u_\ep\|_\infty,\|v_\ep\|_\infty\leq2\|c\|_\infty$; see, e.g., \cite{di2020optimal}. Arguing as in the proof of Proposition~\ref{prop:uniform-Linfty-density}, we obtain
\begin{equation*}
\exp\left(-\frac{5\|c\|_\infty}{\ep}\right)
\leq\frac{d\pi_\ep^*}{dm}
\leq\exp\left(\frac{5\|c\|_\infty}{\ep}\right)
\quad m\text{-a.e.}
\end{equation*}
Therefore, both the Tsallis- and KL-regularized optimal couplings satisfy the same upper $L^\infty$ bound, while the KL density also has a positive lower bound.
\end{rem}

\section{$\Gamma$-convergence from Tsallis to KL regularization}\label{sec:Gamma convergence}

\begin{pf}{Theorem \ref{thm;narrow}}
Let $\{q_k\}_{k \in \N} \subset (1,2]$ satisfy $q_k\downarrow1$. We define the functionals by
\begin{equation*}
F_k(\pi) = \left\{
\begin{aligned} &\int_{X_1 \times X_2} c\, d\pi + \ep D_{q_k}(\pi, m)
&& \mathrm{for}\ \pi \in \Pi(\mu_1, \mu_2), \\
&+ \infty, && \mathrm{otherwise},
\end{aligned}
\right.
\end{equation*}
and
\begin{equation*}
F(\pi) = \left\{
\begin{aligned}
&\int_{X_1 \times X_2} c\, d\pi + \ep \mathrm{KL}(\pi, m)
&& \mathrm{for}\ \pi \in \Pi(\mu_1, \mu_2), \\
&+ \infty, && \mathrm{otherwise}.
\end{aligned}
\right.
\end{equation*}

First, we prove the liminf inequality
\begin{equation}\label{liminf;0}
  F(\pi)
  \leq \liminf_{k \to \infty} F_k(\pi_k)
\end{equation}
for any $\{\pi_k\}_{k \in \N}$ such that $\pi_k \to \pi$ narrowly. If $\displaystyle \liminf_{k\to\infty}F_k(\pi_k)=+\infty$, there is nothing to prove. Otherwise, after passing to a subsequence realizing the liminf, we may assume that $F_k(\pi_k)<+\infty$ for every $k$. Hence $\pi_k\in\Pi(\mu_1,\mu_2)$, and the narrow closedness of $\Pi(\mu_1,\mu_2)$ implies that $\pi\in\Pi(\mu_1,\mu_2)$. Moreover, for every probability measure $\eta$,
\begin{equation*}
\mathrm{KL}(\eta,m) \leq D_{q_k}(\eta,m).
\end{equation*}
Indeed, when $\eta \ll m$, this follows from $\log x\leq\log_{2-q_k}x$ for $x>0$, while otherwise both sides are $+\infty$. Consequently, the lower semicontinuity of the Kullback--Leibler divergence gives
\begin{equation}\label{liminf;1}
  \ep \mathrm{KL}(\pi, m)
  \leq \liminf_{k \to \infty} \ep \mathrm{KL}(\pi_k, m)
  \leq \liminf_{k \to \infty} \ep D_{q_k}(\pi_k, m).
\end{equation}
On the other hand, by the lower semicontinuity of the transport cost functional,
\begin{equation}\label{liminf;2}
\int_{X_1 \times X_2} c\, d\pi \leq \liminf_{k \to \infty} \int_{X_1 \times X_2} c\, d\pi_k.
\end{equation}
Combining \eqref{liminf;1} and \eqref{liminf;2}, and using the superadditivity of the $\liminf$, we obtain
\begin{equation*}
F(\pi) \leq \liminf_{k\to\infty} F_k(\pi_k),
\end{equation*}
which proves \eqref{liminf;0}.

Next, we prove the limsup inequality. For any $\pi\in\mathcal P(X_1\times X_2)$, we seek a sequence $\{\pi_k\}_{k\in\N}$ converging narrowly to $\pi$ such that
\begin{equation}\label{limsup;0}
  F(\pi) \geq \limsup_{k \to \infty} F_k(\pi_k).
\end{equation}
It is enough to consider the case $F(\pi)<+\infty$, since otherwise the constant sequence $\pi_k = \pi$ satisfies the limsup inequality. Thus $\pi\in\Pi(\mu_1,\mu_2)$ and $d\pi=\rho\,dm$ for some density $\rho$.
For $n \in \N$, we set $\tilde{\rho}_n = \min\{\rho, n\}$, 
\begin{equation*}
  \dl_n = 1 - \int_{X_1 \times X_2} \tilde{\rho}_n\, dm,
\end{equation*}
\begin{equation*}
a_n(x) = 1 - \int_{X_2} \tilde{\rho}_n(x, y)\, d\mu_2(y), \quad \mathrm{and} \quad b_n(y) = 1 - \int_{X_1} \tilde{\rho}_n(x, y)\, d\mu_1(x). 
\end{equation*}
We note that
\begin{equation*}
\int_{X_1} a_n(x)\, d\mu_1 = \int_{X_2} b_n(y)\, d\mu_2 = \dl_n.
\end{equation*}
We set 
\begin{equation*}
\rho_n(x, y) = \left\{
    \begin{aligned}
    &\tilde{\rho}_n(x, y) + \frac{a_n(x) b_n(y)}{\dl_n}
    && \mathrm{if}\ \dl_n > 0, \\
    &\rho
    && \mathrm{if}\ \dl_n = 0
    \end{aligned}
    \right.
\end{equation*}
and $d\pi_n =\rho_n\,dm$. Then $\pi_n \in \Pi(\mu_1, \mu_2)$. If $\dl_n>0$, since $\dl_n\to0$ by monotone convergence,
\begin{equation*}
\|\pi_n-\pi\|_{\mathrm{TV}} = \int_{X_1\times X_2}|\rho_n-\rho|\,dm \leq \int_{X_1\times X_2} |\rho-\tilde\rho_n| \,dm +\int_{X_1\times X_2}\frac{a_nb_n}{\dl_n}\,dm = 2\dl_n\to0. 
\end{equation*}
If $\dl_n=0$, then $\rho_n=\rho$ by definition. Therefore, $\pi_n\to\pi$ in total variation, and hence narrowly.
Moreover, since $\int_{X_1\times X_2}|\rho_n-\rho|\,dm\to0$, we have
\begin{equation}\label{limsup;2}
\left|\int_{X_1\times X_2}c\,d\pi_n-\int_{X_1\times X_2}c\,d\pi\right|
\leq \|c\|_\infty\int_{X_1\times X_2}|\rho_n-\rho|\,dm\to0.
\end{equation}
Furthermore, by the marginal conditions for $\rho$ and the inequality $\tilde{\rho}_n\leq\rho$, we have $0\leq a_n(x)\leq1$ for $\mu_1$-a.e. $x\in X_1$ and $0\leq b_n(y)\leq1$ for $\mu_2$-a.e. $y\in X_2$. Therefore, if $\dl_n>0$, then
\begin{equation}\label{limsup;1}
\rho_n(x,y)\leq\tilde{\rho}_n(x,y)+\frac{1}{\dl_n}\leq n+\frac{1}{\dl_n}\quad m\text{-a.e.}
\end{equation}
Since $\dl_n>0$, the right-hand side is finite for each fixed $n$, and hence $\rho_n\in L^\infty(m)$. If $\dl_n=0$, then
\begin{equation*}
0=\dl_n=\int_{X_1\times X_2}(\rho-\tilde{\rho}_n)\,dm.
\end{equation*}
Since $\rho-\tilde{\rho}_n\geq0$, it follows that $\rho=\tilde{\rho}_n$ $m$-a.e. By the definition of $\rho_n$, we then have
\begin{equation*}
\rho_n=\rho=\tilde{\rho}_n\leq n\quad m\text{-a.e.}
\end{equation*}
Thus, in either case, $\rho_n\in L^\infty(m)$ for every $n$.

In what follows, we show 
\begin{equation}\label{limsup;3}
\lim_{n\to\infty}\mathrm{KL}(\pi_n,m)=\mathrm{KL}(\pi,m).
\end{equation}
If $\dl_n=0$ for some $n$, then $\rho=\tilde{\rho}_n\leq n$ $m$-a.e. Hence, for every $j\geq n$, we have $\tilde{\rho}_j=\rho$, and therefore $\dl_j=0$ and $\pi_j=\pi$. In this case, \eqref{limsup;3} follows immediately. Otherwise, $\dl_n>0$ for every $n$. We then set
\begin{equation*}
r_n = \frac{\tilde{\rho}_n}{1-\dl_n} \quad\mathrm{and}\quad s_n = \frac{a_nb_n}{\dl_n^2}.
\end{equation*}
Since
\begin{equation*}
\int_{X_1\times X_2}r_n\,dm = 1 \quad\mathrm{and}\quad \int_{X_1\times X_2}s_n\,dm = 1,
\end{equation*}
both $r_n$ and $s_n$ are probability densities with respect to $m$. Moreover,
\begin{equation*}
\rho_n = (1-\dl_n)r_n + \dl_n s_n.
\end{equation*}
Applying Jensen's inequality to the convex function $t\mapsto t\log t$, we obtain
\begin{align*}
\mathrm{KL}(\pi_n,m)
&=\int_{X_1\times X_2}\rho_n\log\rho_n\,dm \\
&\leq (1-\dl_n)\int_{X_1\times X_2}r_n\log r_n\,dm
+\dl_n\int_{X_1\times X_2}s_n\log s_n\,dm \\
&=\int_{X_1\times X_2}\tilde{\rho}_n\log\tilde{\rho}_n\,dm
-(1-\dl_n)\log(1-\dl_n) \\
&\quad +\int_{X_1}a_n\log a_n\,d\mu_1
+\int_{X_2}b_n\log b_n\,d\mu_2
-2\dl_n\log\dl_n.
\end{align*}
Since $\tilde{\rho}_n\leq\rho$, we have
\begin{equation*}
(\tilde{\rho}_n\log\tilde{\rho}_n)_+ \leq (\rho\log\rho)_+.
\end{equation*}
Moreover, $t\log t\geq-e^{-1}$ for every $t\geq0$, where $0\log0=0$. Hence
\begin{equation*}
\left|\tilde{\rho}_n\log\tilde{\rho}_n\right| \leq (\rho\log\rho)_+ + e^{-1} \quad m\text{-a.e.}
\end{equation*}
Since $\mathrm{KL}(\pi,m)<+\infty$, the right-hand side is integrable with respect to $m$. Therefore, by the dominated convergence theorem, we have
\begin{equation*}
\lim_{n\to\infty}\int_{X_1\times X_2}\tilde{\rho}_n\log\tilde{\rho}_n\,dm = \int_{X_1\times X_2}\rho\log\rho\,dm.
\end{equation*}
Furthermore, by the marginal identities for $\rho$ and the monotone convergence theorem,
\begin{equation*}
\int_{X_2}\tilde{\rho}_n(x,y)\,d\mu_2(y)\uparrow1
\quad\text{for $\mu_1$-a.e. }x,
\end{equation*}
and
\begin{equation*}
\int_{X_1}\tilde{\rho}_n(x,y)\,d\mu_1(x)\uparrow1
\quad\text{for $\mu_2$-a.e. }y.
\end{equation*}
Hence $a_n\downarrow0$ $\mu_1$-a.e. and $b_n\downarrow0$ $\mu_2$-a.e. Since $0\leq a_n,b_n\leq1$ and $|t\log t|\leq e^{-1}$ on $[0,1]$, where $0\log0=0$, the dominated convergence theorem gives
\begin{equation*}
\lim_{n\to\infty}\int_{X_1}a_n\log a_n\,d\mu_1=0 \quad\mathrm{and}\quad
\lim_{n\to\infty}\int_{X_2}b_n\log b_n\,d\mu_2=0.
\end{equation*}
Since $\dl_n\to0$, we also have
\begin{equation*}
(1-\dl_n)\log(1-\dl_n)\to0 \quad\mathrm{and}\quad \dl_n\log\dl_n\to0.
\end{equation*}
Therefore,
\begin{equation*}
\limsup_{n\to\infty}\mathrm{KL}(\pi_n,m)\leq\mathrm{KL}(\pi,m).
\end{equation*}
On the other hand, since $\pi_n\to\pi$ narrowly and the Kullback--Leibler divergence is lower semicontinuous with respect to narrow convergence,
\begin{equation*}
\liminf_{n\to\infty}\mathrm{KL}(\pi_n,m)\geq\mathrm{KL}(\pi,m).
\end{equation*}
Consequently, we obtain \eqref{limsup;3}.

For every fixed $n$, the density $\rho_n$ is bounded. Let $M_n<+\infty$ be such that $0\leq\rho_n\leq M_n$ $m$-a.e. For $t>0$ and $1<q\leq2$, the mean value theorem applied to the function $s\mapsto t^s$ yields
\begin{equation*}
\frac{t^q-t}{q-1}=t^\xi\log t
\end{equation*}
for some $\xi\in(1,q)$. At points where $\rho_n=0$, both $(\rho_n^q-\rho_n)/(q-1)$ and $\rho_n\log\rho_n$ are equal to zero. Hence
\begin{equation*}
\frac{\rho_n^q-\rho_n}{q-1}\to\rho_n\log\rho_n \quad m\text{-a.e. as }q \downarrow 1.
\end{equation*}
Moreover, if $0<t\leq1$, then $|t^\xi\log t|\leq t|\log t|\leq e^{-1}$, while if $1\leq t\leq M_n$, then $|t^\xi\log t|\leq M_n^2\log(\max\{M_n,1\})$. Thus the integrands are dominated by an $m$-integrable constant independent of $q$. Therefore, by the dominated convergence theorem,
\begin{equation*}
D_q(\pi_n,m)\to\mathrm{KL}(\pi_n,m)
\quad\text{as }q\downarrow1.
\end{equation*}

We now choose a diagonal sequence so that $n\to\infty$ and $q_k\downarrow1$ simultaneously. For each $n\in\N$, there exists $K_n\in\N$ such that
\begin{equation*}
|D_{q_k}(\pi_n,m)-\mathrm{KL}(\pi_n,m)|<\frac{1}{n}
\end{equation*}
for every $k\geq K_n$. Increasing $K_n$ if necessary, we may assume that $K_1<K_2<\cdots$. Define
\begin{equation*}
n(k)=
\begin{cases}
1, & k<K_1,\\
\max\{n\in\N:K_n\leq k\}, & k\geq K_1.
\end{cases}
\end{equation*}
Then $n(k)\to\infty$ as $k\to\infty$. Set $\hat{\pi}_k=\pi_{n(k)}$. Since $n(k)\to\infty$ and $\pi_n\to\pi$ narrowly, we have $\hat{\pi}_k\to\pi$ narrowly. Moreover, for every $k\geq K_1$, we have $k\geq K_{n(k)}$, and hence
\begin{equation*}
|D_{q_k}(\hat{\pi}_k,m)-\mathrm{KL}(\hat{\pi}_k,m)|<\frac{1}{n(k)}.
\end{equation*}
Therefore,
\begin{equation*}
D_{q_k}(\hat{\pi}_k,m)\leq\mathrm{KL}(\hat{\pi}_k,m)+\frac{1}{n(k)}.
\end{equation*}
Since $n(k)\to\infty$, \eqref{limsup;3} gives
\begin{equation}\label{limsup;4}
\limsup_{k\to\infty}D_{q_k}(\hat{\pi}_k,m)\leq\mathrm{KL}(\pi,m).
\end{equation}
Finally, \eqref{limsup;2} also implies
\begin{equation*}
\int_{X_1\times X_2}c\,d\hat{\pi}_k\to\int_{X_1\times X_2}c\,d\pi.
\end{equation*}
Combining this convergence with \eqref{limsup;4} yields \eqref{limsup;0}.

Recall that the sequence $\{F_k\}$ is equicoercive on $\mathcal{P}(X_1\times X_2)$ if, for every $M\in\R$, there exists a narrowly compact set $K_M\subset\mathcal{P}(X_1\times X_2)$ such that
\begin{equation*}
\{\pi\in\mathcal{P}(X_1\times X_2):F_k(\pi)\leq M\}\subset K_M
\end{equation*}
for every $k\in\N$. Since $F_k(\pi)<+\infty$ implies $\pi\in\Pi(\mu_1,\mu_2)$, every sublevel set of $F_k$ is contained in $\Pi(\mu_1,\mu_2)$. Since $\Pi(\mu_1,\mu_2)$ is narrowly compact, we may take $K_M=\Pi(\mu_1,\mu_2)$ for every $M\in\R$. Hence the sequence $\{F_k\}$ is equicoercive. Each $F_k$, as well as $F$, is lower semicontinuous with respect to narrow convergence. Moreover, $m\in\Pi(\mu_1,\mu_2)$ has finite energy for all $F_k$ and $F$, and hence minimizers exist. Strict convexity of the entropy terms on the convex set $\Pi(\mu_1,\mu_2)$ gives uniqueness. By the fundamental theorem of $\Gamma$-convergence and the equicoercivity of $\{F_k\}$, every cluster point of $\{\pi_k^*\}_{k\in\N}$ is a minimizer of $F$. Since $F$ has the unique minimizer $\pi^*$, every cluster point coincides with $\pi^*$. Hence the whole sequence $\{\pi_k^*\}_{k\in\N}$ converges narrowly to $\pi^*$.
\end{pf}

\section{Error estimates for the Tsallis-to-KL limit}\label{sec:error estimate}
In this section, we derive estimates of order $O(q-1)$ for the difference between the Tsallis- and KL-regularized optimal transport values. We then estimate the difference between the corresponding information projection values.

\subsection{Error estimate for the regularized optimal transport values}
\begin{pf}{Theorem \ref{thm;2}}
We first consider the difference between the following two regularized optimal transport values:
\begin{align}\label{OT;1}
\mathrm{OT}_{q,\ep}(\mu_1, \mu_2) - \mathrm{OT}_{\ep}(\mu_1, \mu_2) \notag 
&= \inf_{\pi \in \Pi(\mu_1, \mu_2)} 
\left\{
\int_{X_1 \times X_2} c\, d\pi + \ep D_q(\pi, m)
\right\} \\
&\quad - \inf_{\pi \in \Pi(\mu_1, \mu_2)}
\left\{
\int_{X_1 \times X_2} c\, d\pi + \ep \mathrm{KL}(\pi, m)
\right\}.
\end{align}
Since
\begin{equation*}
D_q(\pi, m) \geq \mathrm{KL}(\pi, m)
\end{equation*}
for any $\pi \in \Pi(\mu_1,\mu_2)$, it follows immediately that
\begin{equation*}
\mathrm{OT}_{q,\ep}(\mu_1, \mu_2) - \mathrm{OT}_{\ep}(\mu_1, \mu_2) \geq 0.
\end{equation*}

Let $\pi_\ep^*$ be the optimal coupling for the KL-regularized optimal transport problem \eqref{KL-reg OT}. Then
\begin{equation}\label{OT;2}
  \mathrm{OT}_{\ep}(\mu_1, \mu_2)
  = \int_{X_1 \times X_2} c\, d\pi_\ep^* + \ep \mathrm{KL}(\pi_\ep^*, m).
\end{equation}
Substituting \eqref{OT;2} into \eqref{OT;1}, we obtain
\begin{align}
  0 &\leq \mathrm{OT}_{q,\ep}(\mu_1, \mu_2) - \mathrm{OT}_{\ep}(\mu_1, \mu_2) \notag \\
  &= \inf_{\pi \in \Pi(\mu_1, \mu_2)}
  \left\{
    \int_{X_1 \times X_2} c\, d\pi + \ep D_q(\pi, m)
  \right\} 
  - \int_{X_1 \times X_2} c\, d\pi_\ep^* - \ep \mathrm{KL}(\pi_\ep^*, m) \notag \\
  &\leq \int_{X_1 \times X_2} c\, d\pi_\ep^* + \ep D_q(\pi_\ep^*, m)
  - \int_{X_1 \times X_2} c\, d\pi_\ep^* - \ep \mathrm{KL}(\pi_\ep^*, m) \notag \\
  &= \ep (D_q(\pi_\ep^*, m) - \mathrm{KL}(\pi_\ep^*, m)). \label{OT;3}
\end{align}
We estimate the pointwise difference between the two entropy integrands.
Set $a = q-1 \in (0,1]$. Then
\begin{equation*}
\log_{2-q}(x) - \log x
= \frac{x^a - 1}{a} - \log x
= \frac{e^{a \log x} - 1 - a \log x}{a}.
\end{equation*}
If $0 < x < 1$, setting $u=-a\log x>0$, we have
\begin{equation*}
e^{a\log x}-1-a\log x
= e^{-u}-1+u
= \int_0^u (1-e^{-s})\,ds
\leq \int_0^u s\,ds = \dfrac{u^2}{2} = \dfrac{1}{2}a^2(\log x)^2.
\end{equation*}
If $x \geq 1$, setting $t=\log x \geq 0$, we have
\begin{equation*}
e^{a\log x}-1-a\log x
= \int_0^{at}(e^s-1)\,ds
\leq at(e^{at}-1).
\end{equation*}
Moreover,
\begin{equation*}
e^{at}-1
= a\int_0^t e^{ar}\,dr
\leq a\int_0^t e^r\,dr
= a(e^t-1)
= a(x-1).
\end{equation*}
Hence,
\begin{equation*}
e^{a\log x}-1-a\log x
\leq a^2(x-1)\log x.
\end{equation*}
Dividing both sides by $a=q-1$, we obtain
\begin{equation}\label{ineq;Tsallis-to-KL ineq}
\log_{2-q}(x)-\log x
\leq (q-1)h(x),
\end{equation}
where
\begin{equation*}
h(x)
=
\left\{
\begin{aligned}
&\dfrac{1}{2}(\log x)^2, && 0<x<1,\\
&(x-1)\log x, && x\geq1.
\end{aligned}
\right.
\end{equation*}
Using the inequality \eqref{ineq;Tsallis-to-KL ineq}, we have 
\begin{equation*}
D_q(\pi_\ep^*, m) - \mathrm{KL}(\pi_\ep^*, m)
  \leq (q - 1) \int_{X_1 \times X_2} h\left(\frac{d \pi_\ep^*}{dm}\right)\, d\pi_\ep^*.
\end{equation*}
Since $\pi_\ep^*$ is the optimal coupling for \eqref{KL-reg OT}, it admits the representation
\begin{equation*}
 \frac{d\pi_\ep^*}{dm} = \exp \left( \frac{u_\ep(x) + v_\ep(y) - c(x, y)}{\ep} \right),
\end{equation*}
where $(u_\ep,v_\ep)$ is a normalized pair of Schr\"odinger potentials satisfying $\|u_\ep\|_\infty,\|v_\ep\|_\infty\leq2\|c\|_\infty$. Since $0\leq c\leq\|c\|_\infty$, it follows that
\begin{equation*}
\exp\left(-\frac{5\|c\|_\infty}{\ep}\right) \leq \frac{d\pi_\ep^*}{dm} \leq \exp\left(\frac{5\|c\|_\infty}{\ep}\right).
\end{equation*}
Thus, if we define
\begin{equation*}
H_\ep = \max \left\{ \frac{25 \|c\|_\infty^2}{2\ep^2}, \left( \exp\left(\frac{5\|c\|_\infty}{\ep}\right)-1 \right)\frac{5 \|c\|_\infty}{\ep}
\right\},
\end{equation*}
then we have
\begin{equation*}
\int_{X_1 \times X_2} h\left(\frac{d \pi^*_\ep}{dm}\right)\, d\pi^*_\ep \leq H_\ep, 
\end{equation*}
and hence we obtain
\begin{equation}\label{OT;6}
  D_q(\pi_\ep^*, m) - \mathrm{KL}(\pi_\ep^*, m) \leq (q - 1) H_\ep.
\end{equation}
Combining \eqref{OT;3} and \eqref{OT;6}, we obtain
\begin{equation*}
0 \leq \mathrm{OT}_{q,\ep}(\mu_1, \mu_2) - \mathrm{OT}_{\ep}(\mu_1, \mu_2)
\leq \ep (q - 1) H_\ep.
\end{equation*}
\end{pf}

\subsection{Error estimate for the information projection values}
\begin{pf}{Theorem \ref{thm;3}}

We recall that the information-geometric characterization of \eqref{KL-reg OT} is given by
\begin{equation}\label{KL;1}
  \mathrm{OT}_\ep(\mu_1, \mu_2)
  = \ep \inf_{\pi \in \Pi(\mu_1, \mu_2)} \mathrm{KL}(\pi, K_\ep),
\end{equation}
where $K_\ep = e^{-c/\ep} m$. On the other hand, by \eqref{eq;prop1}, the Tsallis-regularized optimal transport problem can be written as
\begin{equation}\label{KL;1q}
  \mathrm{OT}_{q,\ep}(\mu_1, \mu_2)
  = \ep D_q(\pi_{q,\ep}^*, K_{q,\ep})
    - (q-1)\int_{X_1 \times X_2}
    c \log_{2-q}\left(\frac{d\pi_{q,\ep}^*}{dm}\right)\, d\pi_{q,\ep}^*,
\end{equation}
where $\pi_{q,\ep}^*$ is the optimal coupling for \eqref{Tsallis-reg OT}. In this subsection, we estimate the error term
\begin{equation*}
  \inf_{\pi \in \Pi(\mu_1, \mu_2)} D_q(\pi, K_{q,\ep})
  -
  \inf_{\pi \in \Pi(\mu_1, \mu_2)} \mathrm{KL}(\pi, K_\ep).
\end{equation*}
Since
\begin{equation*}
  \ep \inf_{\pi \in \Pi(\mu_1, \mu_2)} D_q(\pi, K_{q,\ep})
  \leq \ep D_q(\pi_{q,\ep}^*, K_{q,\ep}),
\end{equation*}
it follows from \eqref{KL;1} and \eqref{KL;1q} that
\begin{equation}\label{KL;2}
\begin{split}
  &\ep \inf_{\pi \in \Pi(\mu_1, \mu_2)} D_q(\pi, K_{q,\ep})
  - \ep \inf_{\pi \in \Pi(\mu_1, \mu_2)} \mathrm{KL}(\pi, K_\ep) \\
  &\quad \leq \mathrm{OT}_{q,\ep}(\mu_1, \mu_2) - \mathrm{OT}_\ep(\mu_1, \mu_2)
  + (q-1)\int_{X_1 \times X_2}
    c \log_{2-q}\left(\frac{d\pi_{q,\ep}^*}{dm}\right)\, d\pi_{q,\ep}^*.
\end{split}
\end{equation}

We next estimate the second term on the right-hand side of \eqref{KL;2}. Since $c$ is a nonnegative bounded cost function, we have
\begin{align*}
\int_{X_1 \times X_2} c \log_{2-q}\left(\frac{d\pi_{q,\ep}^*}{dm}\right)\, d\pi_{q,\ep}^* 
&\leq \int_{\left\{ \frac{d\pi_{q,\ep}^*}{dm} \geq 1 \right\}} c \log_{2-q}\left(\frac{d\pi_{q,\ep}^*}{dm}\right)\, d\pi_{q,\ep}^* \\ 
&\leq \|c\|_\infty \int_{\left\{ \frac{d\pi_{q,\ep}^*}{dm} \geq 1 \right\}} \log_{2-q}\left(\frac{d\pi_{q,\ep}^*}{dm}\right)\,d\pi_{q,\ep}^*.
\end{align*}
We set
\begin{equation*}
D_q^+(\pi_{q,\ep}^*, m) = \int_{\left\{
\frac{d\pi_{q,\ep}^*}{dm} \geq 1
\right\}}
\log_{2-q}\left(\frac{d\pi_{q,\ep}^*}{dm}\right)\, d\pi_{q,\ep}^*.
\end{equation*}
Then
\begin{equation}\label{KL;3}
  \int_{X_1 \times X_2}
  c \log_{2-q}\left(\frac{d\pi_{q,\ep}^*}{dm}\right)\, d\pi_{q,\ep}^*
  \leq \|c\|_\infty D_q^+(\pi_{q,\ep}^*, m).
\end{equation}
Next, we estimate $D_q^+(\pi_{q,\ep}^*, m)$. Since $1<q\leq 2$, we have
\begin{equation*}
  \log_{2-q}(x)
  = \frac{x^{q-1}-1}{q-1}
  \leq x-1
  \qquad (x \geq 1).
\end{equation*}
Hence,
\begin{align*}
  D_q^+(\pi_{q,\ep}^*, m)
  &\leq \int_{\left\{
  \frac{d\pi_{q,\ep}^*}{dm} \geq 1
  \right\}}
  \left(
  \frac{d\pi_{q,\ep}^*}{dm} - 1
  \right)\, d\pi_{q,\ep}^* 
  = \int_{\left\{
  \frac{d\pi_{q,\ep}^*}{dm} \geq 1
  \right\}}
  \left(
  \frac{d\pi_{q,\ep}^*}{dm} - 1
  \right)
  \frac{d\pi_{q,\ep}^*}{dm}
  \, dm \\
  &\leq \int_{\left\{
  \frac{d\pi_{q,\ep}^*}{dm} \geq 1
  \right\}}
  \left(
  \frac{d\pi_{q,\ep}^*}{dm}
  \right)^2
  \, dm 
  \leq \int_{X_1 \times X_2}
  \left(
  \frac{d\pi_{q,\ep}^*}{dm}
  \right)^2
  \, dm.
\end{align*}
Since $\pi_{q,\ep}^*$ is a probability measure, we have
\begin{equation*}
\int_{X_1 \times X_2}
\frac{d\pi_{q,\ep}^*}{dm}
\, dm
= 1.
\end{equation*}
Therefore, by Proposition \ref{prop:uniform-Linfty-density},
\begin{align*}
  \int_{X_1 \times X_2}
  \left( \frac{d\pi_{q,\ep}^*}{dm} \right)^2 \, dm &\leq \left\| \frac{d\pi_{q,\ep}^*}{dm} \right\|_\infty \int_{X_1 \times X_2} \frac{d\pi_{q,\ep}^*}{dm} \, dm 
  = \left\| \frac{d\pi_{q,\ep}^*}{dm} \right\|_\infty  \\
  &\leq \left(1 + \frac{5 (q - 1)}{q \ep} \|c\|_{\infty}\right)^{\frac{1}{q - 1}} 
  \leq \exp\left(\frac{5\|c\|_\infty}{\ep}\right).
\end{align*}
Thus we obtain 
\begin{equation}\label{KL;4}
D_q^+(\pi_{q,\ep}^*, m) \leq \exp\left(\frac{5\|c\|_\infty}{\ep}\right).
\end{equation}
Combining \eqref{KL;2}, \eqref{KL;3}, \eqref{KL;4}, and the estimate obtained in the previous subsection, we obtain
\begin{equation*}
\begin{split}
&\ep \inf_{\pi \in \Pi(\mu_1, \mu_2)} D_q(\pi, K_{q,\ep}) - \ep \inf_{\pi \in \Pi(\mu_1, \mu_2)} \mathrm{KL}(\pi, K_\ep) \\
&\quad \leq \left(H_\ep\ep+\|c\|_\infty \exp\left(\frac{5\|c\|_\infty}{\ep}\right)\right)(q-1),
\end{split}
\end{equation*}
where we set
\begin{equation*}
C_\ep = H_\ep\ep + \|c\|_\infty\exp\left(\frac{5\|c\|_\infty}{\ep}\right).
\end{equation*}

This proves the upper bound with a constant independent of $q$. We next prove the reverse estimate. Let $\pi=\rho m\in\Pi(\mu_1,\mu_2)$ satisfy $D_q(\pi,m)<+\infty$. Since $\mathrm{KL}(\pi,m)\leq D_q(\pi,m)$, all the terms below are finite. By
\begin{equation*}
\frac{dK_{q,\ep}}{dm}=\left(1+(q-1)\frac{c}{\ep}\right)^{-\frac{1}{q-1}}, \qquad \frac{dK_\ep}{dm}=e^{-c/\ep},
\end{equation*}
we obtain
\begin{align*}
D_q(\pi,K_{q,\ep})
&=\frac{1}{q-1}\int_{X_1\times X_2}
\left\{\rho^q\left(1+(q-1)\frac{c}{\ep}\right)-\rho\right\}\,dm \\
&=D_q(\pi,m)+\frac{1}{\ep}\int_{X_1\times X_2}c\rho^q\,dm,
\end{align*}
and
\begin{equation*}
\mathrm{KL}(\pi,K_\ep)=\int_{X_1\times X_2}\rho\log\left(\rho e^{c/\ep}\right)\,dm=\mathrm{KL}(\pi,m)+\frac{1}{\ep}\int_{X_1\times X_2}c\rho\,dm.
\end{equation*}
Therefore,
\begin{equation*}
\begin{split}
&\ep D_q(\pi,K_{q,\ep})-\ep\mathrm{KL}(\pi,K_\ep) \\
&\quad=\ep\bigl(D_q(\pi,m)-\mathrm{KL}(\pi,m)\bigr)+\int_{X_1\times X_2}c\,(\rho^q-\rho)\,dm.
\end{split}
\end{equation*}
The first term on the right-hand side is nonnegative. Furthermore,
\begin{equation*}
\min_{t\geq0}(t^q-t) =-(q-1)q^{-q/(q-1)}\geq-(q-1).
\end{equation*}
Since $0\leq c\leq\|c\|_\infty$, it follows that
\begin{equation*}
\ep D_q(\pi,K_{q,\ep}) \geq\ep\mathrm{KL}(\pi,K_\ep)-\|c\|_\infty(q-1).
\end{equation*}
Taking the infimum over $\Pi(\mu_1,\mu_2)$ gives
\begin{equation*}
\ep\inf_{\pi\in\Pi(\mu_1,\mu_2)}D_q(\pi,K_{q,\ep}) -\ep\inf_{\pi\in\Pi(\mu_1,\mu_2)}\mathrm{KL}(\pi,K_\ep) \geq-\|c\|_\infty(q-1).
\end{equation*}
Finally, $C_\ep\geq\|c\|_\infty$, and hence the upper and lower estimates yield
\begin{equation*}
\left| \ep\inf_{\pi\in\Pi(\mu_1,\mu_2)}D_q(\pi,K_{q,\ep}) -\ep\inf_{\pi\in\Pi(\mu_1,\mu_2)}\mathrm{KL}(\pi,K_\ep) \right|\leq C_\ep(q-1).
\end{equation*}

\end{pf}

\section*{Acknowledgments}
The first author is partially supported by JSPS Grant-in-Aid for Research Activity Start-up (22K20336). The second author is partially supported by JSPS Grant-in-Aid for Early-Career Scientists (21K13822) and JST PRESTO (JPMJPR24KD).


\bibliography{Tsallis}
\bibliographystyle{plain}

\bigskip

\noindent
\textsc{
Faculty of Advanced Science and Technology, Kumamoto University, Kumamoto 860-8555, Japan}\\
\noindent
\emph{Electronic mail address:}
suguro@kumamoto-u.ac.jp

\bigskip

\noindent
\textsc{ 
Mathematical Science Center for Co-creative Society, Tohoku University, 
Sendai 980-0845, Japan} \\
\noindent
\emph{Electronic mail address:}
toshiaki.yachimura.a4@tohoku.ac.jp

\end{document}